\newcommand{\Z}{\mathbb{Z}}
\newcommand{\E}{\mathbb{E}}
\newcommand{\ket}[1]{| #1 \rangle}
\DeclareMathOperator{\Var}{Var}
\newcommand{\be}{\begin{equation}}
\newcommand{\ee}{\end{equation}}
\newcommand{\bea}{\begin{eqnarray}}
\newcommand{\eea}{\end{eqnarray}}
\newcommand{\bes}{\begin{equation*}}
\newcommand{\ees}{\end{equation*}}
\newcommand{\beas}{\begin{eqnarray*}}
\newcommand{\eeas}{\end{eqnarray*}}
\newtheorem*{rep@theorem}{\rep@title}
\newcommand{\newreptheorem}[2]{%
\newenvironment{rep#1}[1]{%
 \def\rep@title{#2 \ref{##1} (restated)}%
 \begin{rep@theorem}}%
 {\end{rep@theorem}}}
\newtheorem{thm}{Theorem}
\newtheorem*{thm*}{Theorem}
\newtheorem{lem}[thm]{Lemma}
\newtheorem{claim}[thm]{Claim}
\newtheorem*{lem*}{Lemma}
\newcommand{\boxalgm}[3]{
\renewcommand{\figurename}{Algorithm}
\begin{figure}[htb]
\noindent \framebox{
\begin{minipage}{\textwidth-0.5cm}
#3
\end{minipage}
}
\caption{#2}
\label{#1}
\end{figure}
\renewcommand{\figurename}{Figure}
}
\begin{document}

\title{The quantum complexity of approximating the frequency moments}
\author{Ashley Montanaro\thanks{School of Mathematics, University of Bristol, UK; {\tt ashley.montanaro@bristol.ac.uk}.}}
\maketitle

\begin{abstract}
The $k$'th frequency moment of a sequence of integers is defined as $F_k = \sum_j n_j^k$, where $n_j$ is the number of times that $j$ occurs in the sequence. Here we study the quantum complexity of approximately computing the frequency moments in two settings. In the query complexity setting, we wish to minimise the number of queries to the input used to approximate $F_k$ up to relative error $\epsilon$. We give quantum algorithms which outperform the best possible classical algorithms up to quadratically. In the multiple-pass streaming setting, we see the elements of the input one at a time, and seek to minimise the amount of storage space, or passes over the data, used to approximate $F_k$. We describe quantum algorithms for $F_0$, $F_2$ and $F_\infty$ in this model which substantially outperform the best possible classical algorithms in certain parameter regimes.
\end{abstract}


\section{Introduction}

Given a sequence of integers $a_1,\dots,a_n$, where $a_i \in [m] \coloneqq \{1,\dots,m\}$ for each $i$, let $n_j$ denote the number of elements in the sequence which are equal to the integer $j$. Then the $k$'th frequency moment is defined as
\[ F_k \coloneqq \sum_j n_j^k. \]
Thus, for example, $F_0$ is the number of distinct elements in the sequence, and $F_1=n$. We also define $F_\infty \coloneqq \max_j n_j$. Here we consider only integer $k$, and look to approximate $F_k$ up to relative error $\epsilon$ with bounded failure probability, or in other words to output $\widetilde{F_k}$ such that
\[ \Pr[|\widetilde{F_k} - F_k| > \epsilon F_k] \le 1/3. \]
As well as the intrinsic mathematical interest of this fundamental problem, it also has many practical uses, with $F_0$ and $F_2$ in particular occuring in database applications (see e.g.~\cite{alon99,charikar00}). It is therefore unsurprising that a vast amount of work, in a variety of different contexts, has been done to characterise the complexity of approximating the frequency moments; we summarise some of this below. In this work we address the complexity of approximating the frequency moments using a quantum computer.

We consider two different models where one could hope to achieve quantum speedups, both of which correspond to well-studied versions of the problem classically:

\begin{itemize}
\item The streaming model. In this model, we receive each item $a_i$ one at a time, in sequence. In the single-pass streaming model, we are asked to output an estimate for $F_k$ at the end of the sequence. In the multiple-pass streaming model, the stream repeats a number of times and we are asked to output an estimate after some number of repetitions. The challenge is that we assume that we only have access to limited storage space, and in particular not enough to store the whole stream.

\item The query complexity model. Here we can query arbitrary elements $a_i$, and seek to approximate $F_k$ using the minimal number of queries.
\end{itemize}
We assume in both cases that we know the total number of elements $n$ in advance. This seems to be essential in the query complexity model in order to enable queries to arbitrary elements of the input. In the streaming model, our algorithms will actually only need an upper bound on $n$. We allow probability of failure $1/3$, which can be improved to $\delta$, for arbitrary fixed $\delta > 0$, by repetition and taking the median.

In each of the above models, which we define somewhat more formally below, we obtain quantum improvements over the best possible classical complexities. Our main results can be summarised as follows:
\begin{itemize}
\item In the query complexity model, $F_0$ can be approximated with $O(\sqrt{n}/\epsilon)$ quantum queries, as compared with the classical lower bound of $\Omega(n)$ for $\epsilon = O(1)$~\cite{charikar00}, and this bound is tight.

\item In the query complexity model, $F_k$ can be approximated with $\widetilde{O}(n^{(1-1/k)(1-2^{k-2}/(2^k-1))}/\epsilon^2)$ quantum queries\footnote{The $\widetilde{O}$ notation hides polylogarithmic factors.} for $k \ge 2$, as compared with the classical lower bound of $\Omega(n^{1-1/k}/\epsilon^{2})$~\cite{acharya15}. Observe that $(1-2^{k-2}/(2^k-1)) \le 3/4$ for all $k \ge 2$, so this gives an asymptotic separation for all such $k$. In the important special case of $F_2$, the quantum upper bound is $\widetilde{O}(n^{1/3}/\epsilon^2)$, as compared with the classical lower bound $\Omega(n^{1/2}/\epsilon^{2})$, and the dependence on $n$ of this bound is tight up to logarithmic factors.


\item In the streaming model, $F_0$ can be approximated by a bounded-error quantum algorithm which stores $O((\log n) \log 1/\epsilon)$ qubits and makes $O(1/\epsilon)$ passes over the input. Any classical algorithm that makes $T$ passes over the input must store $\Omega(1/(\epsilon^2 T))$ bits~\cite{chakrabarti12}, assuming that $\epsilon = \Omega(1/\sqrt{m})$.

\item In the streaming model, $F_2$ can be approximated by a bounded-error quantum algorithm which stores $O(\log n + \log(1/\epsilon))$ qubits and makes $\widetilde{O}(1/\epsilon)$ passes over the input. The classical lower bound is the same as for $F_0$~\cite{chakrabarti12}.

\item In the streaming model, $F_\infty$ can be computed exactly by a bounded-error quantum algorithm which stores $O(\log^2 n)$ qubits and makes $O(\sqrt{n})$ passes over the input. For sufficiently large $m$, any classical algorithm that makes $T$ passes must store $\Omega(n / T)$ bits~\cite{alon96}.

\item The above quantum upper bounds in the multiple-pass streaming model are all optimal, up to logarithmic factors.
\end{itemize}
For simplicity, we assume in these bounds that $m$ is quite large, $m \ge 2n$; more detailed complexities are given in the statements of the individual bounds below. We can also always assume that $m = O(n^2)$ because we can hash all the input elements to a set of size $O(n^2)$ without affecting the frequency moments, with 99\% probability; thus $\log m = \Theta(\log n)$. Finding efficient algorithms in terms of $m$ for smaller universe sizes $m$ has also been an important theme classically. However, the separations we obtain are generally maximised for large $m$.

Depending on $n$ and $\epsilon$, the quantum streaming complexities for $F_0$ and $F_2$ may not be substantially better, or could even be worse, than their corresponding classical lower bounds~\cite{chakrabarti12}. Further, the classical bounds are almost tight, as there exist single-pass streaming algorithms for these frequency moments which use $O(1/\epsilon^2 + \log m)$~\cite{kane10} and $O((\log m)/\epsilon^2)$~\cite{alon96} bits of storage, respectively. The quantum algorithms outperform the classical lower bounds in the regime $1/\sqrt{m} \ll \epsilon \ll 1/\log n$.

If we consider streaming algorithms which are restricted to making $\widetilde{O}(1/\epsilon)$ passes, classical algorithms for $F_0$ and $F_2$ must store $\widetilde{\Omega}(1/\epsilon)$ bits~\cite{chakrabarti12}, whereas the quantum algorithms store $O((\log n) \log 1/\epsilon)$ or $O(\log n + \log 1/\epsilon)$ qubits, respectively. For large $m$ and small $\epsilon$ (say $m = \Theta(n)$, $\epsilon = \Theta(1/n^\delta)$ for some $\delta > 1/2$), this is exponentially smaller than the best possible classical complexity. However, if we consider the product of the space usage $S$ and the number of passes $T$ (a standard measure used in time-space tradeoffs), the classical lower bound is $TS = \Omega(1/\epsilon^2)$, while the quantum upper bounds satisfy $TS = \widetilde{O}((\log n)/\epsilon)$. These could therefore be seen as near-quadratic separations.

The separations we obtain in the multiple-pass streaming model are, arguably, the first demonstration of a quantum advantage over classical computation for computing functions of practical interest in this model. Exponential separations have been shown in the one-pass streaming model by Gavinsky et al.~\cite{gavinsky09} for a partial function, and by Le Gall~\cite{legall06} for a total function. Unfortunately, these functions seem somewhat contrived. (However, it is possible to reinterpret the result of Le Gall as applying to computing the Disjointness function in the multiple-pass streaming model. In this setting the problem becomes more natural but the complexity reduction becomes only quadratic.)


\subsection{Related work}

There has been a huge amount of work characterising the classical complexity of approximating the frequency moments in various settings, only a fraction of which we mention here. See, for example,~\cite{braverman14,kane10} for further references.

In the streaming model:

\begin{itemize}
\item ($F_0$) Flajolet and Martin gave a single-pass streaming algorithm which uses $O(\log m)$ bits of space and computes $F_0$ up to a constant factor~\cite{flajolet85}. Alon, Matias and Szegedy improved this by replacing the randomness used in the Flajolet-Martin algorithm with a family of simple hash functions~\cite{alon96}. Bar-Yossef et al.\ gave several different algorithms for approximating $F_0$ up to a $(1+\epsilon)$ factor, using as little as $\widetilde{O}(1/\epsilon^2 + \log m)$ space~\cite{baryossef02}.
Kane, Nelson and Woodruff have now completed this line of research by giving a single-pass streaming algorithm which approximates $F_0$ using $O(1/\epsilon^2 + \log m)$ space~\cite{kane10}. This is optimal for single-pass streaming algorithms; a space lower bound of $\Omega(\log m)$ was shown by Alon, Matias and Szegedy~\cite{alon96}, and a lower bound of $\Omega(1/\epsilon^2)$ was shown by Woodruff~\cite{woodruff04}. This was generalised to an $\Omega(1/(\epsilon^2 T))$ lower bound for $T$-pass streaming algorithms by Chakrabarti and Regev~\cite{chakrabarti12}.

\item ($F_2$) Alon, Matias and Szegedy gave an $O((\log m)/\epsilon^2)$ single-pass streaming algorithm~\cite{alon96}, and also showed an $\Omega(\log m)$ lower bound. An $\Omega(1/\epsilon^2)$ lower bound for single-pass streaming algorithms was proven by Woodruff~\cite{woodruff04}, which was similarly extended to an $\Omega(1/(\epsilon^2 T))$ lower bound for $T$-pass streaming algorithms by Chakrabarti and Regev~\cite{chakrabarti12}.

\item ($F_k$, $k > 2$) Alon, Matias and Szegedy gave single-pass streaming algorithms using space $\widetilde{O}(m^{1-1/k})$~\cite{alon96}. An almost-optimal $\widetilde{O}(m^{1-2/k}/\epsilon^{10+4/k})$ algorithm was later given by Indyk and Woodruff for any $k > 2$~\cite{indyk05}. This was simplified by Bhuvanagiri et al., who also improved the dependence on $\epsilon$~\cite{bhuvanagiri06}. Very recently, Braverman et al.\ gave an $O(m^{1-2/k})$ algorithm for $k > 3$ and $\epsilon = \Omega(1)$~\cite{braverman14}. This effectively matches the tightest known general space lower bound on $T$-pass streaming algorithms, $\Omega(m^{1-2/k}/(\epsilon^{4/k} T))$ shown by Woodruff and Zhang~\cite{woodruff12}.

\item ($F_\infty$) Alon, Matias and Szegedy showed an $\Omega(m)$ space lower bound~\cite{alon96}, even for multiple-pass streaming algorithms with constant $\epsilon$, by a reduction from the communication complexity of Disjointness.
\end{itemize}
Near-optimal time-space tradeoffs for the related problem of exactly computing frequency moments over sliding windows were proven by Beame, Clifford and Machmouchi~\cite{beame13}.

The classical {\em query} complexity of approximating the frequency moments has also been studied, under the name of sample complexity. Charikar et al.~\cite{charikar00} gave a lower bound of $\Omega(n(1-\epsilon)^2)$ queries for approximating $F_0$. For any $k \ge 2$, Bar-Yossef~\cite{baryossef02a} showed a lower bound of $\Omega(n^{1-1/k} / \epsilon^{1/k})$, and a nearly matching upper bound (for $\epsilon = \Omega(1)$, $k = O(1)$) of $O(n^{1-1/k} / \epsilon^2)$. Very recently, the lower bound has been improved to a tight $\Omega(n^{1-1/k} / \epsilon^2)$ by Acharya et al.~\cite{acharya15}.

In the quantum setting, remarkably little seems to be known about the complexity of approximately computing frequency moments. Coffey and Prezkuta~\cite{coffey08} propose a quantum algorithm based on quantum counting which computes $F_\infty$ exactly for a sequence of $n$ elements, each picked from a set of size $m$, using $O(m\sqrt{n}\log m)$ queries. However, this complexity does not seem to be correct (cf.\ the lower bound of $\Omega(n)$ for exact computation of $F_\infty$ with $m=2$ which we prove below). 

Kara~\cite{kara05} gave a quantum algorithm for finding an $\epsilon$-approximate modal value which uses $O((m^{3/2} \log m)/\epsilon)$ queries. Here a modal value is an element which occurs with frequency $F_\infty$ in the input sequence, an $\epsilon$-approximate modal value is an element which occurs with frequency at least $F_\infty/(1+\epsilon)$, and $m$ is again the size of the set of values. Note that once an approximate modal value is determined, $F_\infty$ itself can be approximately computed using quantum counting at the cost of an additional $O(\sqrt{n})$ queries. A quantum algorithm for computing $F_0$ over sliding windows was given in~\cite{beame13}, and achieves better time-space tradeoffs than are possible classically.

In terms of lower bounds, it was shown by Buhrman et al.~\cite{buhrman05} that computing $F_0$ exactly requires $\Omega(n)$ quantum queries. This result was later sharpened by Beame and Machmouchi~\cite{beame12} to show that even distinguishing between the cases of a function being 2-to-1 and almost 2-to-1 requires $\Omega(n)$ quantum queries.

Very recent independent work of Ambainis et al.~\cite{ambainis16} has considered a related problem to approximating $F_0$: testing the image size of a function. The quantum algorithm of~\cite{ambainis16} is based in the setting of property testing, and has subtly different parameters to the algorithm for $F_0$ presented here. Given oracle access to a function $f\colon[n] \rightarrow [m]$, their algorithm distinguishes between two cases: a) the image of $f$ is of size at most $k$; b) an $\epsilon$ fraction of the output values of $f$ need to be changed to reduce its image to size at most $k$. The algorithm uses $O(\sqrt{k / \epsilon} \log k)$ quantum queries.


\subsection{Techniques}

The new quantum algorithms we obtain are based on combining a number of different, previously known ingredients. Interestingly, ideas from classical streaming algorithms turn out to be useful for developing efficient quantum query algorithms; on the other hand, previously known efficient quantum query algorithms help to develop new quantum streaming algorithms.

The quantum query algorithm for $F_0$ is rather straightforward and is based around the idea of Bar-Yossef et al.~\cite{baryossef02} from the streaming setting that it suffices to compute the $O(1/\epsilon^2)$ smallest values of a pairwise independent hash function to estimate $F_0$. This can be done efficiently using a quantum algorithm of D\"urr et al.~\cite{durr04}. The algorithm for $F_k$, $k \ge 2$, is more involved, and starts from the observation~\cite{baryossef02a} that a good approximation of $F_k$ can be found by counting $k$-wise collisions in a large enough random subset $S$ of the inputs. On the other hand, if there are not too many $k$-wise collisions in $S$, the number of $k$-wise collisions can be computed efficiently using a quantum algorithm for $k$-distinctness, the problem of finding $k$ equal elements within $S$~\cite{belovs12}. The algorithm therefore runs the $k$-distinctness subroutine on random subsets $S$ which are exponentially increasing in size, until it finds a $k$-wise collision. It then switches to estimating the number of $k$-wise collisions using the $k$-distinctness algorithm.

In the quantum streaming model, to approximately compute $F_0$ we modify a different algorithm of Bar-Yossef et al.~\cite{baryossef02}. The idea is to use quantum amplitude estimation~\cite{brassard02} to approximate the probability that a random hash function $h\colon[m] \rightarrow [R]$, where $R = \Theta(F_0)$, maps any of the elements in the stream to 1. This enables a quadratic improvement, in terms of the scaling with $\epsilon$, over the classical algorithm in~\cite{baryossef02}. The main technical difficulty is to ensure that checking whether any of the elements in the stream are mapped to 1 can be implemented reversibly and space-efficiently. The efficient quantum algorithm for $F_2$ applies a quantum subroutine for efficient estimation of the expected value of random variables with bounded variance~\cite{montanaro15} to an estimator defined by Alon, Matias and Szegedy~\cite{alon96}. Finally, the algorithm for $F_\infty$ implements the quantum algorithm of D\"urr and H\o yer for finding the maximum~\cite{durr96} in a streaming setting.

The lower bounds in both the query and streaming models are based around the use of reductions. In the case of the query model, we reduce from well-studied problems in query complexity such as the threshold and element distinctness functions. In the case of the streaming model, we reduce from the Gap-Hamming, Disjointness and Equality problems in communication complexity.

\section{Quantum query complexity}

In this section, we describe quantum query algorithms for approximately computing $F_k$, followed by lower bounds. We use the standard model of quantum query complexity~\cite{buhrman02,hoyer05}. The algorithm is given access to the input via a unitary operator $O$ which maps $O\ket{i}\ket{x} \mapsto \ket{i}\ket{x + a_i}$, where $i \in [n]$, $x \in \Z_{m'}$ for some $m' \ge m$. The goal is to approximately compute $F_k$ with the minimal number of queries to $O$.


\subsection{$F_0$}

Our quantum algorithm for computing $F_0$ is based on a classical algorithm of Bar-Yossef et al.~\cite{baryossef02}. The starting point is the following idea of Flajolet and Martin~\cite{flajolet85} and Alon, Matias and Szegedy~\cite{alon96}: Given a uniformly random function $h\colon[m] \rightarrow [0,1]$, the value $\min_i h(a_i)$ should provide a good approximation of $1 / F_0$. Indeed, the expected minimum of $F_0$ random variables uniformly distributed in $[0,1]$ is precisely $1/(F_0+1)$. To achieve an estimate of $F_0$ accurate up to relative error $\epsilon$, it turns out to be sufficient to know the $O(1/\epsilon^2)$ smallest distinct values of $h(a_i)$, for a pairwise independent hash function $h$~\cite{baryossef02}. We can calculate these efficiently using a quantum algorithm of D\"urr et al.~\cite{durr04} for finding the $d$ smallest values of a function $f$, with the additional constraint that the values have to be of different types.

\begin{thm}[D\"urr et al.~\cite{durr04}]
\label{thm:dsmallest}
Given oracle access to two functions $f,g:[n] \rightarrow \Z$ and an integer $d$, there is a quantum algorithm which uses $O(\sqrt{dn})$ queries to $f$ and $g$ and outputs a set of $d'$ indices $I$, where $d' = \min\{d, |\{g(j):j \in [n]\}| \}$, such that:
\begin{itemize}
\item $g(i) \neq g(j)$ for all $i,j \in I$;
\item For all $i \in I$ and $j \in [n]\backslash I$, if $f(j) < f(i)$ then $f(i') \le f(j)$ for some $i' \in I$ with $g(i')=g(j)$.
\end{itemize}
The algorithm fails with probability at most $\delta$, for arbitrary $\delta = \Omega(1)$.
\end{thm}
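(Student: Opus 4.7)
The plan is to adapt the Dürr--Høyer minimum-finding algorithm and its extension by Dürr et al.\ to the $d$-smallest setting, modifying the Grover marking predicate to respect the type constraint imposed by $g$. Throughout, the algorithm maintains a candidate set $I \subseteq [n]$ of at most $d$ indices with pairwise distinct $g$-values, together with an auxiliary data structure mapping each $g$-value currently appearing in $I$ to its representative index.

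At each step, declare an index $j \in [n]$ \emph{marked} if either (a) $g(j)$ is not represented in $I$ and either $|I| < d$ or $f(j) < \max_{i \in I} f(i)$; or (b) there exists $i \in I$ with $g(i) = g(j)$ and $f(j) < f(i)$. This predicate can be evaluated using $O(1)$ queries to $f$ and $g$ plus $O(1)$ lookups in the data structure, so standard Grover search returns a marked element (if any exists) at the usual cost. When such a $j$ is found, update $I$: in case (a), insert $j$, evicting the element of largest $f$-value from $I$ whenever $|I|$ was already $d$; in case (b), replace $i$ by $j$. Each update strictly reduces the lexicographic potential $(-|I|, \sum_{i \in I} f(i))$, so the process terminates. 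Once no marked element remains, condition (b) failing gives $f(i') \le f(j)$ for every outside $j$ whose type is represented in $I$ by some $i'$, while condition (a) failing forces any outside $j$ of unrepresented type to satisfy $|I| = d$ and $f(j) \ge \max_{i \in I} f(i)$; together these yield both properties of $I$ claimed by the theorem, with $|I|$ equal to $d' = \min\{d, |\{g(j) : j \in [n]\}|\}$ by a short case analysis.

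For the query bound I would mimic the analysis of Dürr et al.\ essentially verbatim. Let $T \coloneqq \{g(j) : j \in [n]\}$ and define $\tilde{f}(t) \coloneqq \min\{f(j) : g(j) = t\}$ for each $t \in T$. Couple the execution with a virtual run of the original $d$-smallest algorithm applied to the values $\{\tilde{f}(t)\}_{t \in T}$: updates of type (a) correspond exactly to the ``discover a new smaller value'' iterations of the virtual algorithm, which the standard analysis bounds at $O(\sqrt{dn})$ expected queries overall. The main obstacle is bounding the cost of type-(b) updates, which have no analogue in the virtual algorithm; I would handle this by observing that each type-(b) update strictly shrinks the set of improving indices within its type, so the successive Grover searches needed for type-(b) updates have geometrically decreasing expected costs and telescope into the overall $O(\sqrt{dn})$ budget. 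Finally, standard success-probability amplification reduces the failure probability to any desired $\delta = \Omega(1)$.
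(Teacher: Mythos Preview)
The paper does not prove this theorem at all: it is stated as a cited result of D\"urr, Heiligman, H{\o}yer and Mhalla~\cite{durr04}, and is used as a black box in Algorithm~\ref{alg:f0}. There is therefore no ``paper's own proof'' to compare against.

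Your sketch is a reasonable reconstruction of the original D\"urr et al.\ argument: maintain $d$ threshold indices of distinct types, mark an index $j$ if it improves either an existing type-representative or the current worst threshold, and use Grover search to locate such improving indices. The correctness analysis you give is fine. The query-complexity part is the delicate step, and your treatment of it is too loose to stand as a proof. In particular, the claim that type-(b) updates ``have geometrically decreasing expected costs and telescope into the overall $O(\sqrt{dn})$ budget'' is not obviously true: a single type can absorb many type-(b) updates whose individual Grover costs do not form a geometric series in any evident way, and the coupling with a virtual run on the $\tilde f$-values does not by itself control these intra-type refinements. The actual analysis in~\cite{durr04} bounds the total work by summing over ranks rather than by this kind of telescoping, and that is where the $O(\sqrt{dn})$ bound genuinely comes from. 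If you want a self-contained proof you should reproduce that rank-based calculation; otherwise, as the paper does, simply cite the result.
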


The quantum algorithm for approximately computing $F_0$ is formally described as Algorithm \ref{alg:f0}.

\boxalgm{alg:f0}{Computing $F_0$}{
Set $d = \lceil 96 / \epsilon^2 \rceil$, $M = m^3$.
\begin{enumerate}
\item Let $h\colon[m] \rightarrow [M]$ be picked at random from a pairwise independent family of hash functions.
\item Use the algorithm of Theorem \ref{thm:dsmallest} with $f(i) = g(i) = h(a_i)$ and $\delta = 1/15$ to compute the $d$ smallest distinct values of $h(a_i)$. Let $v$ be the $d$'th smallest value.
\item Output $d M / v$.
\end{enumerate}
}

\begin{thm}
Algorithm \ref{alg:f0} makes $O(\sqrt{n}/\epsilon)$ queries and outputs an estimate of $F_0$ accurate up to relative error $\epsilon$ with probability at least $3/5-1/m$.
\end{thm}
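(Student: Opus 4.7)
My plan is to separate the query-complexity bound (which comes essentially for free from Theorem \ref{thm:dsmallest}) from the correctness analysis, and to decompose correctness into three failure modes handled by a union bound: failure of the $d$-smallest subroutine, collision of $h$ on distinct support elements, and failure of a Chebyshev-based concentration argument for $v$.

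For the query bound, I would observe that each evaluation of $f(i)=g(i)=h(a_i)$ costs a single query to the oracle $O$ (with $h$ computed internally from fresh randomness and needing no queries). Hence Theorem \ref{thm:dsmallest} applied with parameters $d=\lceil 96/\epsilon^2\rceil$ and $\delta=1/15$ immediately yields $O(\sqrt{dn}) = O(\sqrt{n}/\epsilon)$ queries in total.

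The correctness argument I have in mind is as follows. The subroutine of Theorem \ref{thm:dsmallest} fails with probability at most $1/15$. Because $h$ is drawn from a pairwise-independent family with range $M=m^3$, a union bound over the at most $\binom{F_0}{2}$ pairs of distinct support elements gives a collision probability at most $F_0^2/(2M) \le 1/(2m)$. Conditioning on neither event, $v$ is the $d$-th smallest value of $h(j)$ over the distinct support elements $j$, so $v \le T$ iff $X_T \coloneqq |\{j \in \supp : h(j) \le T\}| \ge d$. Setting $T_\pm \coloneqq dM/((1\pm\epsilon)F_0)$, pairwise independence gives $\E[X_{T_\pm}] = d/(1\pm\epsilon)$ and $\Var[X_{T_\pm}] \le \E[X_{T_\pm}]$ (variance of a sum of pairwise-independent indicators), so Chebyshev's inequality bounds $\Pr[X_{T_+} \ge d]$ and $\Pr[X_{T_-} < d]$ each by $O(1/(d\epsilon^2))$. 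Plugging in $d=\lceil 96/\epsilon^2\rceil$ and summing gives a core failure probability well below $1/16$, so the total (including the two preceding contributions) sits comfortably below $2/5 + 1/m$, matching the claimed success probability $3/5-1/m$.

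The main obstacle I expect is the boundary case $F_0 < d$, in which step~2 cannot return $d$ distinct hash values and $v$ as written is ill-defined. My plan is to handle this by noting that, under the good event on $h$ (no collisions among the $F_0$ support values), the subroutine returns exactly $d' = F_0$ distinct values, from which $F_0$ can be read off exactly; a short case split before computing $dM/v$ disposes of this regime without affecting the rest of the analysis.
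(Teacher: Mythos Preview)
Your proposal is correct and follows essentially the same approach as the paper, which simply cites Bar-Yossef et al.\ for the correctness of the $dM/v$ estimator (giving success probability at least $2/3-1/m$) and then invokes Theorem~\ref{thm:dsmallest} with $\delta=1/15$ for the query bound and the remaining $1/15$ failure probability. Your version unpacks the cited argument---the collision bound via $M=m^3$, the Chebyshev analysis of $X_T$ using pairwise independence, and the boundary case $F_0<d$---which is exactly the content of the Bar-Yossef et al.\ proof the paper defers to.
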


\begin{proof}
It is shown in the proof of Theorem 1 in~\cite{baryossef02} that, for the value of $d$ chosen by Algorithm \ref{alg:f0}, $dM / v$ approximates $F_0$ up to a $1+\epsilon$ multiplicative factor with probability at least $2/3-1/m$. The claim then follows from the bounds of Theorem \ref{thm:dsmallest}.
\end{proof}



\subsection{$F_k$ for $k > 1$}

We begin by describing the technical tools required for the efficient quantum query algorithm for approximating $F_k$, starting with the underlying quantum subroutine for the so-called $k$-distinctness problem.


\begin{thm}[Belovs~\cite{belovs12}]
Fix integer $k \ge 2$ and real $\delta$ such that $0 < \delta < 1$. There is a quantum algorithm which, given query access to a sequence $S = s_1,\dots,s_n$, determines whether there exists a set $I$ of $k$ distinct indices such that $s_i=s_j$ for all $i,j \in I$. The output of the algorithm is either such a set $I$ or ``no''. The algorithm uses $O(n^{1-2^{k-2}/(2^k-1)} \log(1/\delta)) = o(n^{3/4} \log(1/\delta))$ queries to $S$. It outputs an incorrect answer with probability at most $\delta$.
\end{thm}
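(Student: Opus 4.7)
My plan is to construct an upper bound on the general adversary bound $\mathrm{ADV}^\pm$ of $k$-distinctness via the learning graph framework of Belovs. Recall that the adversary bound characterises bounded-error quantum query complexity up to constants, and any learning graph of complexity $C$ yields a feasible dual adversary solution of value $O(C)$, which can in turn be converted into a quantum query algorithm making $O(C)$ queries with success probability $2/3$. So it suffices to exhibit a learning graph for $k$-distinctness of complexity $O(n^{1-2^{k-2}/(2^k-1)})$.

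The learning graph would be a $k$-stage non-adaptive graph whose vertices are subsets $T\subseteq[n]$ with $|T|\le k$. An edge $T\to T\cup\{j\}$ ``loads'' the value $s_j$ in stage $|T|+1$ and carries a weight $w_{|T|+1}$ to be optimised. On a negative input the contribution of stage $i$ to the learning graph complexity is at most $\sum_{|T|=i}1/w_i=\binom{n}{i}/w_i$, since the flow may be spread over essentially all $i$-subsets. On a positive input with some hidden $k$-collision $R$, one must construct a unit flow from $\emptyset$ to subsets containing $R$; the key trick is to route this flow almost entirely through subsets $T$ that intersect $R$ in as many elements as possible, so that the effective number of flow-carrying vertices at stage $i$ is much smaller than $\binom{n}{i}$.

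Setting $w_i=n^{\alpha_i}$ and balancing the stage-by-stage positive and negative contributions gives a small linear system in the $\alpha_i$ whose optimum yields the claimed exponent $1-2^{k-2}/(2^k-1)$. The decision algorithm obtained this way is promoted to one that also outputs a witnessing set $I$ either by reading off the $k$ flow-carrying indices from the learning-graph procedure directly, or by a standard search-to-decision reduction at an $O(k\log n)$ overhead. Finally, $O(\log(1/\delta))$ independent repetitions drive the error probability below $\delta$, since a correct ``yes'' answer is self-verifying (one can check the returned $I$ satisfies $s_i=s_j$) and a ``no'' majority vote handles the opposite case. The main obstacle, and the core of Belovs' analysis, is showing that the positive flow can actually be concentrated on near-$R$ vertices against an arbitrary input; this is done by exploiting the full permutation symmetry of $k$-distinctness and averaging a candidate flow over the stabiliser of the (unknown) target collision, rather than via any naive vertex-counting bound.
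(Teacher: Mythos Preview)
First, a framing remark: the paper does not prove this theorem. It is quoted as a black box from Belovs' work, and the only argument supplied in the paper is the one-line observation that constant error can be driven down to $\delta$ by $O(\log(1/\delta))$ repetitions, since a claimed equal $k$-tuple can be verified with $k$ extra queries. Your final paragraph handles this amplification correctly.

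However, the learning-graph sketch you give does \emph{not} yield the stated exponent, and this is a genuine gap. The graph you describe --- vertices are subsets $T\subseteq[n]$ with $|T|\le k$, edges load one element per stage, $k$ stages total --- is the ``simple'' $k$-distinctness learning graph. Optimising its stage weights gives complexity $\Theta(n^{k/(k+1)})$, which matches Ambainis' quantum-walk bound, not Belovs' improved bound. For $k=2$ the two exponents coincide ($2/3$), but already for $k=3$ one has $k/(k+1)=3/4$ versus $1-2^{k-2}/(2^k-1)=5/7$, and the gap widens with $k$. So the linear system you allude to cannot produce $1-2^{k-2}/(2^k-1)$ from the graph you wrote down.

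What Belovs actually does is qualitatively different. The vertices of his learning graph are subsets of $[n]$ of \emph{polynomial} size in $n$, not size $\le k$. The construction proceeds in $k-1$ ``loading'' phases of sizes $r_1\gg r_2\gg\cdots\gg r_{k-1}$ (each a power of $n$), interleaved with ``guessing'' steps that pin down one element of the hidden $k$-collision at a time; the final element is then found by a Grover-style search. The crucial point is that later phases are cheaper because they are conditioned on earlier phases already having captured some collision elements, and the exponentially decreasing sequence $r_i$ is what produces the $2^{k-2}/(2^k-1)$ term. Your sketch has no large loaded sets and no such conditioning, so it cannot reach the target complexity. If you want to reconstruct the argument, the right starting point is a learning graph whose vertices are tuples of nested subsets of sizes $\sum_{j\le i} r_j$, not $k$-element subsets.
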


This algorithm is normally presented with failure probability $1/3$, but this can be reduced to $\delta$ by repetition, noting that we can check a claimed equal $k$-tuple using $k$ additional queries.

We will also need a technical lemma, which shows that $F_k$ can be expressed in terms of the number of $k$-wise collisions occurring in a random subset of the input integers. This lemma was essentially previously shown in~\cite{baryossef02a}, generalising the proof of~\cite{goldreich00} for the case $k=2$. However, as the terminology and parameter choice of these previous works is somewhat different to our usage here, we state and prove it afresh. Let $\binom{[\ell]}{k}$ denote the set of $k$-subsets of $[\ell]$.

\begin{lem}
\label{lem:collmeanvar}
Fix $\ell$ such that $1 \le \ell \le n$. Let $s_1,\dots,s_\ell \in [n]$ be picked uniformly at random and define
\[ C_k(s_1,\dots,s_\ell) \coloneqq |\{ T \in \binom{[\ell]}{k}: a_{s_i} = a_{s_j} \text{ for all } i,j \in T \}|. \]
Then
\[ \E_{s_1,\dots,s_\ell}[C_k(s_1,\dots,s_\ell)] = \frac{\binom{\ell}{k}F_k}{n^k} \]
and
\[ \Var(C_k) \le \sum_{q=k}^{2k-1} \left(\frac{\ell F_k^{1/k}}{n}\right)^q. \]
\end{lem}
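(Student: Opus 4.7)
The plan is to introduce, for each $T \in \binom{[\ell]}{k}$, the indicator $X_T$ that $a_{s_i} = a_{s_j}$ for all $i,j \in T$, so that $C_k = \sum_T X_T$. The expectation calculation is then immediate by linearity: for each fixed $T$, the indices $\{s_i\}_{i \in T}$ are i.i.d.\ uniform on $[n]$, so $\Pr[X_T = 1] = \sum_j (n_j/n)^k = F_k/n^k$, giving $\E[C_k] = \binom{\ell}{k}F_k/n^k$ as claimed.

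For the variance I would expand $\E[C_k^2] = \sum_{T,T'} \E[X_T X_{T'}]$ and group pairs by intersection size $r := |T \cap T'|$; the number of pairs with intersection exactly $r$ is $\binom{\ell}{k}\binom{k}{r}\binom{\ell-k}{k-r}$. When $r=0$ the two indicators depend on disjoint coordinates and hence are independent, so $\E[X_T X_{T'}] = (F_k/n^k)^2$. When $r \ge 1$, any shared index forces $X_T X_{T'} = 1$ to require all $|T \cup T'| = 2k-r$ coordinates to take a single common value, yielding $\E[X_T X_{T'}] = F_{2k-r}/n^{2k-r}$. Using the Vandermonde identity $\binom{\ell}{k}^2 = \sum_{r=0}^k \binom{\ell}{k}\binom{k}{r}\binom{\ell-k}{k-r}$ to rewrite $\E[C_k]^2$, the $r=0$ contribution cancels and we obtain
\[
\Var(C_k) \;=\; \sum_{r=1}^{k} \binom{\ell}{k}\binom{k}{r}\binom{\ell-k}{k-r}\left[\frac{F_{2k-r}}{n^{2k-r}} - \left(\frac{F_k}{n^k}\right)^2\right] \;\le\; \sum_{r=1}^{k} \binom{\ell}{k}\binom{k}{r}\binom{\ell-k}{k-r} \frac{F_{2k-r}}{n^{2k-r}}.
\]

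The remaining step is to bound each summand by $(\ell F_k^{1/k}/n)^{2k-r}$. For the binomial product I would use the crude estimate $\binom{\ell}{k}\binom{k}{r}\binom{\ell-k}{k-r} \le \frac{\ell^{2k-r}}{r!\,((k-r)!)^2} \le \ell^{2k-r}$, and for the frequency moments I would invoke monotonicity of $\ell^p$-norms on the vector $(n_1, n_2, \dots)$: since $2k-r \ge k$, $F_{2k-r}^{1/(2k-r)} \le F_k^{1/k}$, i.e.\ $F_{2k-r} \le F_k^{(2k-r)/k}$. Reindexing by $q = 2k-r$ then converts the sum into $\sum_{q=k}^{2k-1}(\ell F_k^{1/k}/n)^q$, matching the stated bound. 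The main obstacle is just the bookkeeping in the second-moment expansion — in particular making the Vandermonde cancellation of the $r=0$ term transparent, so that the variance is controlled purely by intersecting pairs — after which the remaining estimates reduce to routine binomial and norm-monotonicity inequalities.
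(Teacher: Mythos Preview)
Your proof is correct and follows essentially the same approach as the paper: both decompose $C_k$ into indicators $X_T$, expand the second moment over pairs grouped by overlap, bound the number of pairs with $|T\cup T'|=q$ by $\ell^q$, and invoke $\ell_p$-norm monotonicity to get $F_q \le F_k^{q/k}$. The only cosmetic difference is that the paper keeps the disjoint-pair term separate and subtracts $\E[C_k]^2=\binom{\ell}{k}^2(F_k/n^k)^2$ at the end (noting the leftover $-\binom{\ell}{k}\big(\binom{\ell}{k}-\binom{\ell-k}{k}\big)(F_k/n^k)^2$ is nonpositive), whereas you use the Vandermonde identity to distribute $\E[C_k]^2$ across all $r$ and cancel the $r=0$ term exactly --- a slightly tidier piece of bookkeeping, but the same argument.
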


\begin{proof}
See Appendix \ref{app:proofs}.
\end{proof}

\boxalgm{alg:fk}{Computing $F_k$}{
\begin{enumerate}
\item Set $\ell = n$.
\item For $i = 0,\dots,\lceil \log_2 n\rceil$:
\begin{enumerate}
\item Pick $s_1,\dots,s_{2^i} \in [n]$ uniformly at random and let $S$ be the sequence $a_{s_1},\dots,a_{s_{2^i}}$.
\item Apply a $k$-distinctness algorithm to $S$ with failure probability $1/(8\log_2 n)$.
\item If it returns a set of $k$ equal elements, set $\ell=2^i$ and terminate the loop.
\end{enumerate}
\item Set $M= \lceil K/\epsilon^2 \rceil$ for some universal constant $K$ to be determined. For $r = 1,\dots,M$:
\begin{enumerate}
\item Pick $s_1,\dots,s_\ell \in [n]$ uniformly at random and let $S$ be the sequence $a_{s_1},\dots,a_{s_\ell}$.
\item Let $T$ and $T'$ be empty sequences.
\item Repeat the following subroutine forever:
\begin{enumerate}
\item Apply a $k$-distinctness algorithm to $S\setminus T'$ with failure probability $\epsilon^2/(8K\ell)$.
\item If it returns ``no'', terminate.
\item If it returns a $k$-tuple of indices $I = (i_1,\dots,i_k)$ such that the corresponding elements of $S$ are all equal, update $T$ to $T \cup I$ and update $T'$ to $T' \cup \{i_k\}$.
\end{enumerate}
\item Let the sequence $B = b_1,\dots,b_\ell$ be defined such that $b_j = S_{T_j}$ for $j=1,\dots, |T|$, and for each $j > |T|$, $b_j$ is an arbitrary integer distinct from all integers $b_{j'}$, $j' < j$.
\item Set $C^{(r)} \coloneqq |\{ U \in \binom{[\ell]}{k}: b_i = b_j \text{ for all } i,j\in U\}|$.
\end{enumerate}
\item Output $\frac{n^k}{M \binom{\ell}{k}}\sum_{r=1}^M C^{(r)}$.
\end{enumerate}
}

We are now ready to describe the algorithm for computing $F_k$, as Algorithm \ref{alg:fk}. Informally, the algorithm first uses the $k$-distinctness subroutine to determine a size $\ell$ such that a random subset of size $\ell$ contains some, but not too many, $k$-wise collisions (steps 1-2 below). This is already enough to compute $F_k$ up to constant multiplicative accuracy. The algorithm then switches to estimating the expected number of $k$-wise collisions in a random subsequence $S$ of length $\ell$ (steps 3-4), and uses this to approximate $F_k$ more precisely. The $k$-distinctness subroutine is used here too; by repeatedly running this subroutine (step 3c), we can find all subsets of size $k$ or greater such that all elements in the subset are equal. The total number of $k$-wise collisions will be invariant for any integer sequence consistent with the contents of such subsets. So once we know this information, we can compute the total number of $k$-wise collisions in $S$ without any further queries, by constructing an arbitrary sequence $B$ consistent with this (steps 3d-e).

\begin{thm}
Algorithm \ref{alg:fk} outputs an approximation of $F_k$ which is accurate up to relative error $1+\epsilon$ with probability at least $3/4$, using an expected number of queries which is
\[ O((n^{(1-1/k)(1-2^{k-2}/(2^k-1))}/\epsilon^2) \log (n/\epsilon)). \]
\end{thm}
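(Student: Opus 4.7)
The plan is to establish correctness and expected query complexity by analyzing Phase 1 (steps 1--2), which locates a suitable subset size $\ell$, and Phase 2 (steps 3--4), which estimates $F_k$ by empirically averaging $C^{(r)}$ over $M = O(1/\epsilon^2)$ independent trials; the rescaling $n^k/\binom{\ell}{k}$ then converts the mean into an estimate of $F_k$ via the first-moment formula of Lemma~\ref{lem:collmeanvar}.

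For Phase 1, I would show that with constant probability the algorithm terminates at some $\ell$ satisfying $\ell F_k^{1/k}/n = \Theta(1)$. The upper bound on $\ell$ follows from Paley--Zygmund: once $\ell \ge C n/F_k^{1/k}$ for a large enough constant $C$, Lemma~\ref{lem:collmeanvar} gives $\E[C_k] \gg 1$ while $\Var(C_k) \ll \E[C_k]^2$, forcing $\Pr[C_k \ge 1]$ close to one. The lower bound follows from Markov applied to the geometric sum $\sum_{i : 2^i \le c n/F_k^{1/k}} \E[C_k]$, which is dominated by its largest term $O(c^k)$ and so bounds the probability of premature termination. Conditioning on this good event and on the $O(\log n)$ Phase 1 invocations of $k$-distinctness all succeeding (a union bound with the chosen per-call failure probability $1/(8\log_2 n)$), the variance bound of Lemma~\ref{lem:collmeanvar} yields $\Var(C^{(r)}) = O(1)$ and $\E[C^{(r)}] = \Theta(1)$; Chebyshev's inequality applied to the average of $M = \lceil K/\epsilon^2 \rceil$ samples, with $K$ chosen large enough, then shows that the empirical mean is within a factor $(1\pm\epsilon)$ of the true mean with probability at least $7/8$, which after rescaling proves correctness.

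The correctness of the inner loop in step 3c is a deterministic argument. Each successful $k$-distinctness call exhibits a new $k$-tuple of equal elements in $S \setminus T'$ and removes its last index via $T'$. By induction on the size $g$ of each collision group of size $\ge k$ in $S$, after exactly $g - (k-1)$ calls targeting that group its remaining size drops to $k-1$ and all $g$ of its indices have been added to $T$. Therefore $T$ contains every index whose value appears at least $k$ times in $S$, and replacing the remaining entries of $B$ by arbitrary distinct integers preserves the $k$-collision count, so $C^{(r)}$ equals the true number of $k$-wise collisions in $S$.

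The main technical obstacle is bounding the expected query cost. Each $k$-distinctness invocation in step 3c on a subset of size at most $\ell$ costs $O(\ell^{1-\alpha}\log(\ell/\epsilon^2)) = O(\ell^{1-\alpha}\log(n/\epsilon))$ queries, where $\alpha = 2^{k-2}/(2^k-1)$. The crucial observation is that the number of invocations within one trial is at most $C^{(r)} + 1$, because each successful call eliminates one index from a collision group of size $\ge k$ and the total number of such eliminations is bounded by $\sum_{g \ge k}(g-k+1) \le \sum_g \binom{g}{k} = C^{(r)}$. Since $\E[C^{(r)} \mid \ell] = \binom{\ell}{k} F_k/n^k$, the expected per-trial cost is $O((1+\E[C^{(r)} \mid \ell])\,\ell^{1-\alpha}\log(n/\epsilon))$, which equals $O(\ell^{1-\alpha}\log(n/\epsilon))$ on the good event; the bad tail in $\ell$ contributes only lower-order terms because the probability of large $\ell$ decays exponentially faster than the cost grows. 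Summing over $M$ trials and invoking the power-mean inequality $F_k \ge F_1^k/F_0^{k-1} \ge n$ to conclude $\ell \le O(n^{1-1/k})$ produces the stated bound $O((n^{(1-1/k)(1-\alpha)}/\epsilon^2)\log(n/\epsilon))$, with Phase 1's cost absorbed since its geometrically-growing subset sizes sum to $O(\ell)$. A final union bound combines the Phase 1, Chebyshev, and subroutine-failure probabilities to give overall success at least $3/4$.
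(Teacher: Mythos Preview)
Your proposal is correct and follows essentially the same approach as the paper: Markov's inequality for the lower bound on $\ell$, a second-moment argument (you say Paley--Zygmund, the paper says Chebyshev---these are equivalent here) for the upper bound on $\ell$, Chebyshev on the empirical mean for Phase~2, and bounding the number of inner-loop repetitions by $C^{(r)}$. You are somewhat more careful than the paper in two places: you spell out explicitly why $T$ captures every index in a collision group of size $\ge k$ (the paper simply asserts that $C^{(r)}$ equals the collision count), and you address the unconditional expected cost by arguing that the tail in $\ell$ contributes lower-order terms, whereas the paper computes the expected cost only conditional on $\ell = O(n/F_k^{1/k})$.
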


\begin{proof}
First note that, by a union bound, we can assume that all the uses of the $k$-distinctness algorithms succeed, except with total error probability $1/4$. We now show that it is likely that $\ell$ is chosen such that $An / F_k^{1/k} \le \ell \le B n / F_k^{1/k}$, for some $A$ and $B$ relatively close to 1. By Markov's inequality and Lemma \ref{lem:collmeanvar},
\[ \Pr_{s_1,\dots,s_\ell}[C_k(s_1,\dots,s_\ell) \ge 1] \le \E_{s_1,\dots,s_\ell}[C_k(s_1,\dots,s_\ell)] = \frac{\binom{\ell}{k} F_k}{n^k} \le \frac{F_k}{n^k} \left(\frac{\ell e}{k} \right)^k. \]
Therefore, the probability that $\ell$ is set to be lower than $A n / F_k^{1/k}$ after the first loop is at most
\[ F_k\left(\frac{e}{kn}\right)^k  \sum_{i=0}^{\log_2(An/F_k^{1/k})} 2^{ik} \le 2 F_k\left(\frac{e}{kn}\right)^k \left(\frac{An}{F_k^{1/k}}\right)^k = 2 \left(\frac{A e}{k} \right)^k . \]
On the other hand, let $\ell = Dn/F_k^{1/k}$, for some $D$ such that $B \ge D \ge B/2 \ge 1$ and $\ell \ge 2$. Then from Lemma \ref{lem:collmeanvar},
\be \label{eq:varck} \Var(C_k) \le \sum_{q=k}^{2k-1} \left(\frac{\ell F_k^{1/k}}{n}\right)^q = \sum_{q=k}^{2k-1} D^q \le k\, D^{2k-1}. \ee
So, via Chebyshev's inequality (aka the second moment method), the probability that the algorithm fails to terminate at the point where $\ell = Dn/F_k^{1/k} \le Bn/F_k^{1/k}$ is at most
\[  \Pr_{s_1,\dots,s_\ell}[C_k(s_1,\dots,s_\ell) = 0] \le \frac{\Var(C_k)}{\E[C_k]^2} \le k\,\frac{D^{2k-1} n^{2k}}{\binom{\ell}{k}^2 F_k^2} \le k\,\frac{k^{2k} D^{2k-1} n^{2k}}{\ell^{2k} F_k^2} = \frac{k^{2k+1}}{D} \le \frac{2 k^{2k+1}}{B}. \]
Fixing, for example, $A = (k/e) 20^{-1/k}$, $B = 20 k^{2k+1}$, we have that $An/F_k^{1/k} \le \ell \le Bn/F_k^{1/k}$ except with  probability at most $1/5$. 

Assuming that $\ell$ is indeed bounded in this way, we now show that it suffices to repeat the second subroutine $O(1/\epsilon^2)$ times to estimate $F_k$ up to relative error $1+\epsilon$. By Lemma \ref{lem:collmeanvar} we have $\Var(C_k) \le k B^{2k-1}$. Assuming that the $k$-distinctness algorithm always succeeds, for all $r$, $C^{(r)}$ is equal to the number of $k$-wise collisions in a uniformly random subsequence $S$ of size $\ell$ of the input integers, so is distributed identically to $C_k$. Let $\overline{C} = \frac{1}{M} \sum_{r=1}^M C^{(r)}$. Then $\Var(\overline{C}) \le k B^{2k-1}/M$. By Chebyshev's inequality,
\[ \Pr\left[\left|\overline{C} - \frac{\binom{\ell}{k} F_k}{n^k} \right| \ge \epsilon \frac{\binom{\ell}{k} F_k}{n^k} \right] \le \frac{k^2 B^{4k-2}n^{2k}}{M\epsilon^2\binom{\ell}{k}^2 F_k^2} \le \frac{k^{2k+1} B^{4k-2}n^{2k}}{M\epsilon^2 \ell^{2k} F_k^2} \le \frac{k^{2k+1} B^{4k-2}}{A^{2k}M\epsilon^2}. \]
This implies that, in order to estimate $\E[C_k]$ up to relative error $1+\epsilon$ with failure probability at most $1/5$, say, it suffices to take $M = \lceil 5 k^{2k+1} B^{4k-2} /(A^{2k}\epsilon^2) \rceil$.

We finally compute the expected number of queries used by the algorithm. Assume that $\ell = O(n / F_k^{1/k}) = O(n^{1-1/k})$ and for conciseness write $\alpha = 1-2^{k-2}/(2^k-1)$. The first loop makes
\[ \sum_{i=0}^{\log_2 \ell} O(2^{\alpha i} \log \log n) = O(\ell^{\alpha} \log \log n) = O(n^{\alpha(1-1/k)} \log \log n) \]
queries. In the second loop, the expected number of repetitions of the subroutine is upper-bounded by the expected value of $C_k(s_1,\dots,s_\ell)$, because the number of elements such that there are at least $k-1$ other elements with the same value is a lower bound on the number of $k$-wise collisions. For $\ell \le Bn /  F_k^{1/k}$, this expected value is $O(1)$. Therefore, the expected number of queries used by the subroutine is $O((n/F_k^{1/k})^{\alpha} \log (\ell/\epsilon^2)) = O(n^{\alpha(1-1/k)} \log(n/\epsilon))$. As there are $O(1/\epsilon^2)$ uses of the subroutine, the overall expected number of queries is $O((n^{\alpha(1-1/k)}/\epsilon^2) \log (n/\epsilon))$ as claimed.
\end{proof}

We remark that it might be possible to improve the dependence on $\epsilon$ of this algorithm to $\widetilde{O}(1/\epsilon)$ by replacing step 3 with the use of a quantum algorithm for approximately computing the mean given a bound on the variance~\cite{montanaro15}, as in Section \ref{sec:f2streaming} below. The reason that this does not seem immediate is that we only know an upper bound on the expected runtime of step 3, rather than a worst-case bound as required by this quantum algorithm.


\subsection{$F_\infty$}
\label{sec:finfty}

We observe that $F_\infty$ is closely connected to the much-studied (and confusingly named) $k$-distinctness problem: determining whether a sequence $S$ of integers contains $k$ equal integers~\cite{ambainis04,belovs12}. Approximating $F_\infty$ up to relative error less than $1/(3k)$ allows one to solve $k$-distinctness. The case $k=2$ (element distinctness) has a lower bound of $\Omega(n^{2/3})$~\cite{ambainis04}, implying that the same lower bound holds for computing $F_\infty$ up to relative error $O(1)$. No stronger lower bound is known for higher $k$.

On the other hand, if we could solve $k$-distinctness for all $k$, we could compute $F_\infty$ exactly using binary search. One can show by straightforward techniques (see below) that $k$-distinctness requires $\Omega(n)$ queries for $k=\Omega(n)$. However, to approximate $F_\infty$ it is not necessary to solve $k$-distinctness exactly, but merely to solve a gapped version of $k$-distinctness. That is, we are given parameters $k$, $\epsilon$ and asked to distinguish between the following two cases:
\begin{enumerate}
\item $S$ contains $k$ equal elements;
\item $S$ contains no sequence of $(1-\epsilon)k$ equal elements or more.
\end{enumerate}
If we can approximate $F_\infty$ up to relative error $\epsilon$, we can clearly solve gapped $k$-distinctness. In addition, if we can solve gapped $k$-distinctness for arbitrary $k$, we can approximate $F_\infty$ using binary search, at a cost of an $O(\log n)$ factor in the number of queries used. 


\subsection{Lower bounds}

We can obtain a number of easy lower bounds on the query complexity of estimating the frequency moments via reductions from previously studied problems. 


\begin{thm}
\label{thm:lb}
Assume $m \ge n+1$ and $\epsilon < 1/4$. The quantum query complexity of estimating $F_0$ up to relative error $\epsilon$, with failure probability at most $1/3$, is at least $\Omega(\sqrt{n/\epsilon})$. For any $k > 1$, the quantum query complexity of estimating $F_k$ up to relative error $\epsilon$, with failure probability at most $1/3$, is at least $\Omega(n^{1/2-1/(2k)}/\epsilon)$, and also obeys the bound $\Omega(n^{1/3})$. For $k=\infty$, the quantum query complexity is at least $\Omega(n^{2/3})$. For $k = \infty$, the quantum query complexity is also $\Omega(1/\epsilon)$ for any $m\ge 2$.
\end{thm}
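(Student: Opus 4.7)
The plan is to prove each of the five asserted lower bounds by a reduction from a query problem with an established quantum lower bound. In each case I encode a source instance as an integer sequence $(a_1,\dots,a_n)$ so that one query to $a$ costs one query to the source, and choose parameters so that the two promised cases of the source problem give values of $F_k$ that differ by at least a $(1+c)$ multiplicative factor, for a constant $c$ possibly depending on $k$ (so that any relative-error-$c$ approximator of $F_k$ solves the source).

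Three of the bounds use fixed-gap reductions. For $\Omega(n^{1/3})$ on $F_k$ ($k > 1$) I reduce from the collision problem (1-to-1 vs 2-to-1, quantum lower bound $\Omega(n^{1/3})$ due to Aaronson--Shi): setting $a_i = f(i)$ gives $F_k = n$ in the 1-to-1 case and $F_k = (n/2)\cdot 2^k = 2^{k-1}n$ in the 2-to-1 case, a multiplicative gap at least $2$ for $k \ge 2$. For $\Omega(n^{2/3})$ on $F_\infty$ I reduce from element distinctness (Ambainis' $\Omega(n^{2/3})$ lower bound) by the same $a_i = f(i)$ encoding, observing that $F_\infty = 1$ when $f$ is injective and $F_\infty \ge 2$ otherwise. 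Both reductions require only $m \ge n$, which is covered by the theorem's hypothesis $m \ge n+1$.

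The remaining three bounds reduce from approximate counting: distinguishing $|x| \in \{t, \lceil(1+\epsilon)t\rceil\}$ for $x \in \{0,1\}^n$, which has quantum lower bound $\Omega(\sqrt{n/t}/\epsilon)$ for $t \le n/2$ (Nayak--Wu). For $F_0$, encode $a_i = i$ when $x_i = 1$ and $a_i = 0$ when $x_i = 0$, so that $F_0 = |x| + 1$ on the promise (provided some $x_i = 0$, which holds for large enough $n$); a $(1+\epsilon)$ change in $|x|$ induces a $(1+\Theta(\epsilon))$ change in $F_0$, and choosing $t = \lceil 1/\epsilon\rceil$ maximises the lower bound to $\Omega(\sqrt{n/\epsilon})$. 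For $F_k$ ($k > 1$), swap the encoding — $a_i = 0$ when $x_i = 1$, $a_i = i$ when $x_i = 0$ — so that $F_k = |x|^k + (n-|x|)$; choosing $t = \lceil n^{1/k}\rceil$ makes $|x|^k$ comparable to or dominant over the background $n - |x|$, and the lower bound specialises to $\Omega(\sqrt{n/n^{1/k}}/\epsilon) = \Omega(n^{1/2-1/(2k)}/\epsilon)$. For $F_\infty$ at $m \ge 2$, take $a_i = x_i$ so that $F_\infty = \max(|x|, n - |x|)$, and reduce from distinguishing $|x| = n/2$ from $|x| = n/2 + \lceil \epsilon n \rceil$; these give $F_\infty \in \{n/2, (1/2+\epsilon)n\}$, a relative gap of $\Theta(\epsilon)$, and the lower bound $\Omega(1/\epsilon)$ for bias estimation follows from the polynomial method (any bounded polynomial approximating the bias to additive error $\epsilon$ has degree $\Omega(1/\epsilon)$).

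The main obstacle will be the $F_k$ case at the critical scale $|x| = \Theta(n^{1/k})$, where $|x|^k$ and $n-|x|$ are of comparable order. A short calculation is needed to verify that a $(1+\epsilon)$ multiplicative change in $|x|$ still produces a $(1 + \Omega_k(\epsilon))$ multiplicative change in $F_k$, absorbing only a $k$-dependent constant into the $\Omega$-notation; the other four reductions are essentially immediate once the source problem and encoding are fixed.
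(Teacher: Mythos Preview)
Your proposal is correct and follows essentially the same route as the paper: the same five source problems (counting/threshold, counting at scale $t=\Theta(n^{1/k})$, collision, element distinctness, and bias estimation at $t=n/2$) with the same encodings. The only cosmetic differences are that the paper phrases the $F_0$ reduction via the threshold function $\operatorname{Th}_d$ and the Beals et al.\ bound $\Omega(\sqrt{dn})$ rather than Nayak--Wu, and that for the $\Omega(n^{1/3})$ bound it invokes the small-range collision lower bound of Kutin and Ambainis (needed under the hypothesis $m\ge n+1$) rather than Aaronson--Shi.
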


\begin{proof}
We first deal with the $F_0$ lower bound, by a reduction from the threshold function $\operatorname{Th}_d$ on $n$ bits, for $d \le n/2$. This function is defined by $\operatorname{Th}_d(x) = 1$ if $|x| \ge d$, and $\operatorname{Th}_d(x) = 0$ otherwise. Given an input $x \in \{0,1\}^n$, define a sequence of $n$ integers $a_i$ by $a_i = i$ if $x_i = 1$, and $a_i = 0$ otherwise. Then $\operatorname{Th}_d(x) = 1$ if and only if $F_0 \ge d$. To determine this, it suffices to approximate $F_0$ up to relative error $1/(4d)$. As the threshold function has a lower bound of $\Omega(\sqrt{dn})$ for $d \le n/2$~\cite{beals01}, setting $d = \lceil 1/\epsilon \rceil$ implies the claimed result.

For the first bound for $k > 1$, we reduce quantum counting to estimating $F_k$. Consider the problem of determining whether an unknown $n$-bit string  has Hamming weight $\ell \le n/2$, or has Hamming weight $\ell + \Delta$, given access to queries to the bits of the string. This requires $\Omega(\sqrt{n/\Delta} + \sqrt{\ell n}/\Delta)$ quantum queries~\cite{nayak99}. For any bit-string $x\in \{0,1\}^n$, define a sequence of $n$ integers $a_i$ such that $a_i = i$ if $x_i = 0$, and $a_i = 0$ if $x_i = 1$. Consider two strings $x$, $y$ such that $|x| = n^{1/k}$, and $|y|=(n(1+8\epsilon))^{1/k}$; so $\ell = n^{1/k}$, $\Delta = n^{1/k}((1+8\epsilon)^{1/k}-1)$. Then $F_k$ is equal to $2n - n^{1/k}$ for the first corresponding sequence, and $2n+8\epsilon n - (n(1+8\epsilon))^{1/k}$ for the second sequence. One can verify that approximating $F_k$ up to relative error $\epsilon$ allows these two cases to be distinguished. Then the lower bound of~\cite{nayak99} implies that this problem requires
\[ \Omega\left(\frac{\sqrt{\ell n}}{\Delta}\right) = \Omega\left(\frac{n^{1/2-1/(2k)}}{(1+8\epsilon)^{1/k}-1} \right) = \Omega\left(\frac{n^{1/2-1/(2k)}}{\epsilon}\right) \]
quantum queries as claimed.

For the $\Omega(n^{1/3})$ bound for $k > 1$, we use a reduction from the collision problem with small range, which has a lower bound of $\Omega(n^{1/3})$ quantum queries~\cite{kutin05,ambainis05}. Let $S_1$ be a sequence of $n$ numbers where each number occurs once, and let $S_2$ be a sequence of $n$ numbers where each number occurs twice. Then $F_k(S_1)=n$, $F_k(S_2) = n 2^{k-1}$. So estimating $F_k$ up to multiplicative error $\epsilon < 1/4$ allows these two cases to be distinguished for any $k>1$.

The $\Omega(n^{2/3})$ bound for $k=\infty$ follows from the lower bound on the quantum query complexity of element distinctness with small range~\cite{ambainis05}, which is clearly no easier, while the $\Omega(1/\epsilon)$ bound follows from~\cite{nayak99} using a similar argument to above, considering the problem of determining whether an unknown bit-string has Hamming weight $n/2$ or Hamming weight $(1+\epsilon)n/2$.
\end{proof}

We finally consider the case of computing $F_k$ with very high accuracy.

\begin{thm}
For any $k \neq 1$, and $m \ge n/2+1$, the quantum query complexity of computing $F_k$ up to $O(1/n)$ relative error, with probability of failure at most $1/3$, is $\Theta(n)$. For $k = \infty$, this bound still holds for any $m \ge 2$.
\end{thm}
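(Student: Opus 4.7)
The upper bound $O(n)$ is immediate: $n$ queries read the entire sequence and determine $F_k$ exactly. For the lower bound, my plan is to reduce, for each $k \neq 1$, from the exact computation of the Hamming weight $|y|$ of a string $y \in \{0,1\}^{\lfloor n/2 \rfloor}$; since exact Hamming weight determines PARITY, this inherits the standard $\Omega(n)$ quantum query lower bound of Beals et al.

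For $k \notin \{1,\infty\}$ and $m \ge n/2 + 1$, construct a length-$n$ sequence $a \in [n/2+1]^n$ by $a_i = 1$ if $i > n/2$ or $y_i = 1$, and $a_i = i+1$ if $i \le n/2$ and $y_i = 0$. Each invocation of the oracle for $a$ can be simulated by $O(1)$ queries to $y$. The value $1$ then appears $|y| + n/2$ times, and for each $j \le n/2$ with $y_j = 0$ the value $j+1$ appears exactly once, yielding
\[ F_k = (|y| + n/2)^k + (n/2 - |y|) \text{ for } k \ge 2, \qquad F_0 = n/2 + 1 - |y|. \]
For $k \ge 2$, the forward difference of the leading term is at least $k(|y|+n/2)^{k-1}$, whose ratio to $F_k$ is $\Theta(k/n)$, so consecutive values of $|y|$ yield $F_k$-values with relative gap $\Omega(1/n)$; for $k = 0$, since $F_0 \le n/2 + 1$, a sufficiently small constant times $1/n$ gives absolute error below $1/2$, forcing exact recovery of $F_0$ and hence of $|y|$. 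Either way, estimating $F_k$ to relative error $c/n$ for a suitable small constant $c$ determines $|y|$ exactly, giving the $\Omega(n)$ lower bound.

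For $k = \infty$ with $m \ge 2$, the analogous binary-alphabet reduction applies: set $a_i = 1$ if $i > n/2$ or $y_i = 1$, and $a_i = 2$ otherwise. Then $F_\infty = |y| + n/2 \ge n/2$, so a unit change in $|y|$ produces a relative change of at least $1/n$ in $F_\infty$, and the same argument goes through.

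The main point of care is designing the reduction so that (i) the sequence respects the alphabet bound $m$ (which is why we reduce from an $(n/2)$-bit instance rather than an $n$-bit one), and (ii) $F_k$ is strictly monotone in $|y|$ with consecutive values separated by a relative gap of at least $\Omega(1/n)$ uniformly over the range of $|y|$. The constructions above are tailored to meet both constraints simultaneously; the only minor case split is between $k = 0$ (where $F_0$ is bounded and we exploit sub-unit absolute error) and $k \ge 2$ (where $F_k$ grows like $n^k$ but its discrete derivative grows like $n^{k-1}$, giving the required proportional gap).
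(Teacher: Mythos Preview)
Your proof is correct, but it takes a different route from the paper's. For $k \notin \{1,\infty\}$, the paper reduces from the Beame--Machmouchi problem of distinguishing a perfectly $2$-to-$1$ sequence from an ``almost $2$-to-$1$'' sequence (where exactly two elements are unpaired); that decision problem already carries an $\Omega(n)$ quantum query lower bound, and the two cases have $F_k$-values differing by $\Theta_k(1)$ while $F_k = \Theta_k(n)$. For $k=\infty$ the paper simply invokes the $\Omega(1/\epsilon)$ bound proven in the preceding theorem with $\epsilon = \Theta(1/n)$.

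Your argument instead reduces from \emph{exact Hamming weight} (hence PARITY) on $\lfloor n/2\rfloor$ bits, which is a considerably more elementary lower bound than Beame--Machmouchi. The cost is that your hard instances are less ``uniform'': one symbol has multiplicity $\Theta(n)$ while the others are singletons, whereas the paper's hard instances are near-$2$-to-$1$. So the paper's proof implicitly shows hardness even for inputs with bounded multiplicities, while yours gets the same headline bound with a lighter hammer. Your treatment of $k=\infty$ is essentially the same idea as the paper's, just phrased as a direct reduction rather than by citing the earlier $\Omega(1/\epsilon)$ theorem. Both approaches are valid; yours is arguably the cleaner one for the theorem as stated.
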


\begin{proof}
First we take $k \neq \infty$. Beame and Machmouchi~\cite{beame12} showed that the quantum query complexity of the following problem is $\Theta(n)$. We are given query access to a sequence of $n$ integers $a_i$, each picked from $[m]$, where $m \ge n/2+1$. Either, for each $i$, there exists precisely one $j \neq i$ such that $a_i = a_j$; or this holds for precisely $n-2$ indices $i$, while for the other two indices $i'$, there is no $j \neq i'$ such that $a_{i'} = a_j$. Our task is to distinguish the two cases. Let $S_1$ be a sequence of the first form, and let $S_2$ be a sequence of the second form. For $k>1$, $F_k(S_1) = 2^{k-1} n$ and $F_k(S_2) = 2^{k-1}n + 2$. (For $k=0$, we have $n/2$ vs.\ $n/2-1$). So computing $F_k$ up to $O(1/n)$ relative error for any $k \neq 1$ allows us to distinguish $S_1$ and $S_2$.

For $k=\infty$, a lower bound of $\Omega(1/\epsilon)$ was already shown in Theorem \ref{thm:lb}.
\end{proof}


\section{Quantum streaming complexity}

We now move on to studying the quantum complexity of computing $F_k$ in the $T$-pass streaming setting. The model here is defined as follows:

\begin{enumerate}
\item The quantum algorithm stores a quantum state $\ket{\psi}$ of $S$ qubits, initialised to some starting state which does not depend on the input.
\item Integers in the input stream are received one-by-one until the end of the stream; as each integer $a$ arrives, a corresponding operation $U_a$ is performed on $\ket{\psi}$.
\item Step 2 is repeated $T$ times.
\item At the end, a measurement is made on $\ket{\psi}$ which is supposed to reveal $F_k$.
\end{enumerate}
We remark that, in the case $T=1$, this model is very similar to a one-way quantum finite automaton~\cite{moore00}, and also to variants of the streaming model studied by Gavinsky et al.~\cite{gavinsky09} and Le Gall~\cite{legall06}. We could have (essentially equivalently) also defined this process by splitting the stored qubits into two registers, and performing the following operations for each arriving element $a$:

\begin{enumerate}
\item Apply $U_a$ to the first register, where $U_a \ket{x} = \ket{x + a}$.
\item Apply some fixed unitary operation $V$ to both registers.
\item Apply $U_a^{-1}$ to the first register.
\end{enumerate}
We assume that the algorithm knows the number $n$ of elements in the stream in advance.



\subsection{$F_0$}

In order to obtain an efficient quantum algorithm for estimating $F_0$ in the multiple-pass streaming model, we will modify another efficient classical algorithm of Bar-Yossef et al.~\cite{baryossef02}. The basic idea is as follows. First, a rough estimate $R$ of $F_0$ can be obtained using a classical streaming algorithm of~\cite{alon96}, which returns some $R$ such that $R = \Theta(F_0)$ with high probability using only $O(\log m)$ space. Then, if $h\colon[m] \rightarrow [R]$ is a random function picked from a $t$-wise independent family of hash functions, for some $t = O(\log 1/\epsilon)$, the probability (over the random choice of $h$) that $h$ maps any of the elements in the stream to 1 provides a good estimate for $F_0$. Here, rather than sampling random functions $h$ to estimate this quantity, we will use amplitude estimation~\cite{brassard02}.

The main claim that we need to check is that the operation of checking whether $h$ maps any of the elements in the stream to 1 can be performed reversibly in a {\em space-efficient} fashion, with only a few passes over the stream. Note that standard reversible-computation techniques do not seem to immediately imply this, because the general technique used to reversibly implement each operation in an initially irreversible computation stores ``garbage'' bits for each step in the computation~\cite{nielsen00}, which are later uncomputed. Storing a garbage bit for each element in the stream would use space $\Theta(n)$.


\begin{lem}
\label{lem:probest}
Let $a_1,\dots,a_n$ be a stream, and let $\mathcal{H}$, $|\mathcal{H}| = M$, be a family of functions $h_j:[m] \rightarrow [R]$ such that the map $H\colon(j,x) \mapsto h_j(x)$ can be implemented reversibly classically in time $T$ and space $S$. Then there is a quantum algorithm which estimates $\Pr_j [ \exists i, h_j(a_i) = 1 ]$ up to additive error $\epsilon$ using space $S + O(\log n + \log 1/\epsilon)$, $O(1/\epsilon)$ passes over the input, and time $O(nT)$ per pass.
\end{lem}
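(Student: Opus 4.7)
The plan is to reduce the lemma to an application of quantum amplitude estimation~\cite{brassard02}, which requires building a reversible, space-efficient sub-unitary $A$ whose success probability on a flag qubit equals $p := \Pr_j[\exists i, h_j(a_i)=1]$. Running amplitude estimation to additive accuracy $\epsilon$ needs $O(1/\epsilon)$ invocations of $A$ and $A^{-1}$ together with an $O(\log 1/\epsilon)$-qubit phase register, so everything hinges on implementing $A$ in $O(1)$ passes, space $S + O(\log n)$, and time $O(nT)$ per pass.

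First I would attach to the $j$-register a counter $c$ of $\lceil \log_2(n+1)\rceil$ bits and an $S$-qubit workspace. For each incoming stream element $a$ I would define the operation $U_a$ on $\ket{j}\ket{c}$ by: (i) use the reversible $H$-circuit to compute $h_j(a)$ into the workspace; (ii) conditional on $h_j(a)=1$, increment $c$; (iii) uncompute $h_j(a)$. Each $U_a$ is unitary, takes time $O(T)$, and uses space $S + O(\log n)$. The key observation is that $U_a$ and $U_{a'}$ commute, because both only modify $c$ by an increment conditioned on the fixed $j$-register, so $U := \prod_i U_{a_i}$ depends only on the multiset of stream elements and satisfies $U\ket{j}\ket{0} = \ket{j}\bigl|\,|\{i: h_j(a_i)=1\}|\bigr\rangle$. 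Both $U$ and $U^{-1}$ can therefore be applied by a single forward pass over the stream.

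Given $U$, I would build $A$ acting on $\ket{j}\ket{b}\ket{c}$ (with workspace) by: (1) prepare $\frac{1}{\sqrt{M}}\sum_j \ket{j}\ket{0}\ket{0}$; (2) apply $U$ (one pass); (3) flip $b$ conditioned on $c \neq 0$; (4) apply $U^{-1}$ (one pass) to clean $c$. Then $A$ outputs $\frac{1}{\sqrt{M}}\sum_j \ket{j}\ket{b_j}\ket{0}$ with $b_j = [\exists i: h_j(a_i)=1]$, so measuring $b$ returns $1$ with probability exactly $p$. Amplitude estimation with $O(1/\epsilon)$ iterations applied to $A$ and its inverse therefore estimates $p$ to additive error $\epsilon$ with constant success probability; each iteration costs two passes, giving $O(1/\epsilon)$ passes in total, space $S + O(\log n + \log 1/\epsilon)$, and time $O(nT)$ per pass.

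The main obstacle is precisely the reversibility/space issue flagged right before the lemma: a naive test of the predicate $\exists i$ would leave $\Theta(n)$ garbage bits after one pass, killing the space bound. The resolution — and the one nontrivial step — is the counter-with-commuting-updates trick above: by exploiting that ``conditionally increment'' operations commute, we obtain $U^{-1}$ as a forward-order pass and only need an $O(\log n)$-bit ancilla to record enough information about the stream to decide the predicate reversibly. The remaining details (choosing the exact number of amplitude-estimation rounds, boosting success probability, and verifying that neither the $j$-superposition nor the phase register disturbs the commutativity argument) are routine.
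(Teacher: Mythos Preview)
Your proposal is correct and follows essentially the same approach as the paper: both implement the oracle for amplitude estimation by maintaining a $O(\log n)$-bit counter $N_j$ via per-element conditional increments computed from $H$, flip an ancilla on $[N_j\neq 0]$, and uncompute $N_j$ with a second forward pass applying the corresponding decrements. The commutativity argument you highlight is exactly what justifies the paper's second-pass uncomputation step (the paper leaves this implicit), so there is no substantive difference between the two proofs.
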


\begin{proof}
Set $p = \Pr_j [ \exists i, h_j(a_i) = 1 ]$, and for each $j$ write $N_j = | \{i: h_j(a_i)=1\}|$. We will apply quantum amplitude estimation to estimate $p$. To do so, we need to coherently implement the function $f(j) = [\exists i, h_j(a_i)=1]$. We will show that this can be implemented with two passes over the stream and space $S + O(\log n)$. For each element $a$, the map
\[ U_a : \ket{j}\ket{y} \mapsto \ket{j}\ket{y + [h_j(a)=1]} \]
can be implemented with two uses of $H$ (one to compute, and one to uncompute), each of which uses time $T$ and space $S$. After the whole stream has been read in, performing this map for each element $a_i$, we have effectively implemented the map
\[ \ket{j}\ket{0} \mapsto \ket{j}\ket{N_j} \]
in total time $O(nT)$. We now use an ancilla register to store whether the second register is nonzero:
\[ \ket{j}\ket{0}\ket{z} \mapsto \ket{j}\ket{N_j}\ket{z+[\exists i,h_j(a_i) = 1]} . \]
It remains to uncompute the contents of the second register. We can do this by reading the stream in again and performing the map
\[ U_a^{-1} : \ket{j}\ket{y} \mapsto \ket{j}\ket{y - [h_j(a)=1]}; \]
this requires only one extra qubit to remember whether we are adding or subtracting. The overall result is that we have implemented the map
\[ \ket{j}\ket{z} \mapsto \ket{j}\ket{z+[\exists i,h_j(a_i) = 1]} \]
as required, with two passes over the stream and $S + O(\log n)$ space. Quantum amplitude estimation requires $O(1/\epsilon)$ uses of this map and its inverse, and $O(\log 1/\epsilon)$ qubits of additional space, to estimate $\Pr_j [ \exists i, h_j(a_i) = 1 ]$ up to additive error $\epsilon$~\cite{brassard02}.
\end{proof}

We now apply Lemma \ref{lem:probest} to the framework of Bar-Yossef et al.~\cite{baryossef02}. Let $\mathcal{R}$ be the set of all functions $h\colon [m] \rightarrow [R]$, and set $r = \Pr_{h \in \mathcal{R}} [ \exists i, h(a_i) = 1 ]$. The following lemma says that, if we can approximate $r$, we can approximate $F_0$.

\begin{lem}[Corollary of Bar-Yossef et al.~\cite{baryossef02}]
\label{lem:by}
Fix $\epsilon \le 1$ and assume that $R$ satisfies $2 F_0 \le R \le 50 F_0$. Assume that $\widetilde{r}$ satisfies $|\widetilde{r} - r| \le \epsilon/150$ and define
\[ \widetilde{F_0} = \frac{\ln(1-\widetilde{r})}{\ln(1-1/R)}. \]
Then $|\widetilde{F_0} - F_0| \le \epsilon F_0$.
\end{lem}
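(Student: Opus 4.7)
The plan is to reduce the statement to the identity $r = 1 - (1 - 1/R)^{F_0}$, which holds exactly for a uniformly random $h$, and then to control the sensitivity of the inversion formula $F_0 = \ln(1-r)/\ln(1-1/R)$ to perturbations of $r$.

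First, I would compute $r$ exactly. Only the $F_0$ distinct values appearing in the stream matter, and each is mapped to $1$ independently with probability $1/R$, so $\Pr_h[\forall i,\ h(a_i)\neq 1] = (1-1/R)^{F_0}$ and hence $r = 1-(1-1/R)^{F_0}$. Taking logs gives $F_0 = \ln(1-r)/\ln(1-1/R)$ exactly, so that
\[ \widetilde{F_0} - F_0 \;=\; \frac{\ln(1-\widetilde{r}) - \ln(1-r)}{\ln(1-1/R)} . \]

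Next I would bound numerator and denominator separately. For the denominator, using $-\ln(1-x)\ge x$ for $x\in[0,1)$ and $R\le 50 F_0$, one gets $|\ln(1-1/R)| \ge 1/R \ge 1/(50 F_0)$. For the numerator, I first need a lower bound on $1-r$: from $R\ge 2F_0$, a short calculation (monotonicity of $x\mapsto(1-1/(2x))^x$ for $x\ge 1$, with value $1/2$ at $x=1$) shows $1-r = (1-1/R)^{F_0} \ge 1/2$. Hence $1-\widetilde{r}\ge 1/2 - \epsilon/150$, and by the mean value theorem applied to $\ln(1-\cdot)$,
\[ |\ln(1-\widetilde{r}) - \ln(1-r)| \;\le\; \frac{|\widetilde{r}-r|}{1/2 - \epsilon/150} \;\le\; \frac{\epsilon/150}{1/2-1/150} \;=\; \frac{\epsilon}{74}, \]
using $\epsilon\le 1$.

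Combining, $|\widetilde{F_0}-F_0| \le (\epsilon/74)\cdot 50 F_0 < \epsilon F_0$, as claimed. There is no real obstacle beyond tracking constants carefully; the only subtlety is the lower bound $1-r\ge 1/2$, which requires noting that the seemingly worst case $F_0=1$ is not actually bad because $R\ge 2$ there, and that $(1-1/(2F_0))^{F_0}$ only grows from $1/2$ towards $e^{-1/2}$ as $F_0$ increases.
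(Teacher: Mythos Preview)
Your proof is correct. The paper does not actually prove this lemma; it simply attributes it as a corollary of Bar-Yossef et al.\ and uses it as a black box. Your direct argument---computing $r=1-(1-1/R)^{F_0}$ exactly for a uniformly random $h$, inverting to get $F_0=\ln(1-r)/\ln(1-1/R)$, and then bounding the perturbation via $|\ln(1-1/R)|\ge 1/R\ge 1/(50F_0)$ and $1-r\ge 1/2$---is a clean self-contained verification with constants that close comfortably ($50/74<1$). The monotonicity check for $(1-1/(2x))^x$ is the only slightly delicate point, and you handle it correctly.
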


In addition, if we replace $\mathcal{R}$ with a $t$-wise independent family of hash functions for large enough $t$, the probability (over the random choice of hash function $h$) that there exists an $i$ such that $h(a_i) = 1$ is not substantially affected.

\begin{lem}[Corollary of Bar-Yossef et al.~\cite{baryossef02}]
\label{lem:by2}
Let $\mathcal{H}$ be a $t$-wise independent family of hash functions $h_j:[m] \rightarrow [R]$, where $t = \lceil \ln (300/\epsilon)/\ln 5 \rceil$. Set $p = \Pr_j [ \exists i, h_j(a_i) = 1 ]$. Then $|p-r| \le \epsilon/300$.
\end{lem}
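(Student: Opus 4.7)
The plan is to approximate the indicator of the union $\bigcup_j E_j$ by a truncated inclusion–exclusion polynomial of degree $t$, and then use the fact that expectations of degree-$t$ polynomials are preserved by $t$-wise independence.

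Let $x_1, \dots, x_{F_0}$ enumerate the distinct values appearing in the stream, set $E_j = \{ h : h(x_j) = 1\}$, and write $Z_j = \mathbf{1}_{E_j}$. Both $r$ and $p$ can then be written as $1 - \E[\prod_{j=1}^{F_0}(1 - Z_j)]$, with expectation taken under $\mathcal{R}$ and $\mathcal{H}$ respectively. Expanding $\prod_j(1-Z_j)$ by inclusion–exclusion and truncating at terms of size $\le t$ yields the polynomial
\[ A_t(Z) \;=\; \sum_{T \subseteq [F_0],\; |T| \le t} (-1)^{|T|} \prod_{j \in T} Z_j. \]
Since $A_t$ is a degree-$t$ polynomial in $Z_1,\dots,Z_{F_0}$, the hypothesis that $\mathcal{H}$ is $t$-wise independent forces $\E_\mathcal{H}[A_t(Z)] = \E_\mathcal{R}[A_t(Z)]$, and so $|p-r|$ is controlled entirely by the approximation error $\prod_j(1-Z_j) - A_t(Z)$ under each measure separately.

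The heart of the argument is a pointwise estimate. Using the binomial identity $\sum_{k=0}^{t}(-1)^{k}\binom{s}{k} = (-1)^{t}\binom{s-1}{t}$ (and the fact that the sum vanishes when $1 \le s \le t$) together with $\binom{s-1}{t} \le \binom{s}{t}$, one checks that for every $z \in \{0,1\}^{F_0}$,
\[ \left| A_t(z) - \prod_{j=1}^{F_0}(1-z_j) \right| \;\le\; \binom{|z|}{t} \;=\; e_t(z), \]
where $e_t$ is the $t$-th elementary symmetric polynomial. Crucially $e_t$ itself has degree $t$, so its expectation also matches between the two measures and equals $\binom{F_0}{t} R^{-t}$. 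A triangle inequality then yields
\[ |p - r| \;\le\; \E_\mathcal{H}\bigl[e_t(Z)\bigr] + \E_\mathcal{R}\bigl[e_t(Z)\bigr] \;=\; 2\binom{F_0}{t} R^{-t}. \]
Under the hypothesis $R \ge 2F_0$ inherited from Lemma~\ref{lem:by}, this is at most $2/(2^t\, t!)$, and a direct calculation shows that $t = \lceil \ln(300/\epsilon)/\ln 5 \rceil$ suffices to drive the bound below $\epsilon/300$.

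The main technical obstacle is arranging the polynomial approximation so that its error is itself bounded by a polynomial of degree at most $t$, not $t+1$; otherwise $t$-wise independence alone would be insufficient to match the two expectations. The tighter pointwise bound $|A_t(z) - \prod(1-z_j)| \le e_t(z)$ (rather than the more obvious $e_{t+1}(z) = \binom{|z|}{t+1}$ coming from a naive Bonferroni-tail estimate) is exactly what allows the argument to go through with only $t$-wise independence.
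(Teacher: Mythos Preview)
The paper itself does not prove this lemma; it is quoted as a corollary of Bar-Yossef et al.\ and no argument is given in the paper. Your truncated inclusion--exclusion argument is precisely the method Bar-Yossef et al.\ use, so in that sense your proposal reproduces what the paper is appealing to: the degree-$t$ polynomial $A_t$ has matching expectation under $\mathcal{H}$ and $\mathcal{R}$, and the remainder is controlled by $e_t(Z)$, which also has degree $t$. Your emphasis on bounding the tail by $e_t$ rather than $e_{t+1}$ is exactly the point that makes $t$-wise (rather than $(t{+}1)$-wise) independence sufficient.

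One small caveat: the final ``direct calculation'' is slightly optimistic as stated. With $R \ge 2F_0$ your bound yields $|p-r| \le 2/(t!\,2^t)$, and one needs $2/(t!\,2^t) \le 5^{-t} \le \epsilon/300$. For $t \ge 6$ this holds, but for $t \in \{4,5\}$ (i.e.\ $\epsilon$ roughly in $[0.1,1]$) the inequality $2 \cdot 5^t \le t!\,2^t$ fails by a small constant factor. This does not affect the asymptotics or the correctness of the algorithm, but it means the specific constant in the stated choice of $t$ is not verified by your bound for all $\epsilon \le 1$; you would need either a slightly larger $t$ or to invoke the sharper $\binom{s-1}{t}$ error rather than $\binom{s}{t}$ and track the constants more carefully. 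You should also state explicitly that you are assuming $2F_0 \le R$, since the lemma as written omits this hypothesis.
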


We now have all the ingredients we need for the $F_0$ estimation algorithm, which is formally described as Algorithm \ref{alg:baryossef}. Note that the only quantum ingredient of this algorithm is the use of amplitude estimation in step 3.

\boxalgm{alg:baryossef}{Streaming estimation of $F_0$, based on~\cite{baryossef02}}{
Set $t = \lceil \ln (300/\epsilon)/\ln 5 \rceil$.
\begin{enumerate}
\item Use the algorithm of \cite{alon96} to obtain an estimate $R$ such that $2 F_0 \le R \le 50 F_0$ using one pass over the stream and space $O(\log m)$, with probability at least $3/5$.
\item Let $\mathcal{H}$ be a family of $t$-wise independent hash functions $h_j: [m] \rightarrow [R]$.
\item Using the algorithm of Lemma \ref{lem:probest}, estimate $p = \Pr_j [ \exists i, h_j(a_i) = 1 ]$ up to additive error $\epsilon / 300$. Call the estimate $\widetilde{p}$.
\item Output $\ln(1-\widetilde{p})/\ln(1-1/R)$.
\end{enumerate}
}

\begin{thm}
Algorithm~\ref{alg:baryossef} computes $F_0$ up to relative error $\epsilon$, with failure probability at most $1/3$, using space $O(\log m \log(1/\epsilon) + \log n)$ and $O(1/\epsilon)$ passes over the input.
\end{thm}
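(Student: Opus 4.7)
The plan is to verify correctness, pass count, and space bound separately, treating each step of Algorithm~\ref{alg:baryossef} in turn. For correctness, I would condition on two events: (a) the initial Alon--Matias--Szegedy estimate $R$ satisfies $2F_0 \le R \le 50 F_0$, and (b) the amplitude-estimation-based procedure of Lemma~\ref{lem:probest} returns $\widetilde p$ with $|\widetilde p - p| \le \epsilon/300$. Under (a)+(b), Lemma~\ref{lem:by2} gives $|p - r| \le \epsilon/300$, so the triangle inequality yields $|\widetilde p - r| \le \epsilon/150$, and then Lemma~\ref{lem:by} immediately implies that the output $\ln(1-\widetilde p)/\ln(1-1/R)$ is within relative error $\epsilon$ of $F_0$. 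Event (a) fails with probability at most $2/5$ by the guarantee of~\cite{alon96}; (b) can be driven to any desired constant failure probability by standard median-of-$O(1)$ amplification of amplitude estimation (which affects only constants in the pass count and space). Choosing the constants so that the total failure probability is at most $1/3$ is routine.

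For the pass count, step~1 uses one pass, and step~3 invokes Lemma~\ref{lem:probest} with target additive error $\epsilon/300$, hence costs $O(1/\epsilon)$ passes; steps~2 and~4 are offline. This gives the claimed $O(1/\epsilon)$ passes overall.

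For the space bound, the AMS sketch of step~1 uses $O(\log m)$ bits, which can be stored for the remainder of the computation. The main ingredient in step~3 is the hash family $\mathcal{H}$: a standard $t$-wise independent family on domain $[m]$ (e.g.\ a random degree-$(t-1)$ polynomial over a field of size $\ge m$) can be specified and evaluated reversibly using $S = O(t \log m) = O(\log m \log(1/\epsilon))$ bits, since $t = O(\log(1/\epsilon))$. Plugging this into Lemma~\ref{lem:probest}, the total workspace is $S + O(\log n + \log(1/\epsilon)) = O(\log m \log(1/\epsilon) + \log n)$, absorbing the AMS register and the $O(\log(1/\epsilon))$ qubits of amplitude estimation's phase register.

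The main obstacle, and the step that deserves the most care, is checking that the reversible, space-efficient evaluation of $H\colon(j,x)\mapsto h_j(x)$ really fits within $S = O(\log m \log(1/\epsilon))$ bits so that Lemma~\ref{lem:probest} applies as stated; everything else reduces to collecting the guarantees of the cited lemmas and taking a union bound over the failure events of the AMS estimator and of amplitude estimation. Once the hash-family implementation is in hand, the theorem follows by combining the bounds in the previous paragraphs.
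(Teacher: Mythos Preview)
Your proposal is correct and follows essentially the same route as the paper: combine Lemmas~\ref{lem:probest}, \ref{lem:by}, and~\ref{lem:by2} via the triangle inequality for correctness, read off the $O(1/\epsilon)$ pass count from Lemma~\ref{lem:probest}, and obtain the space bound by plugging the $O(t\log m)=O(\log m\log(1/\epsilon))$ description size of a standard $t$-wise independent family into the $S+O(\log n+\log(1/\epsilon))$ guarantee of Lemma~\ref{lem:probest}. If anything, you are slightly more careful than the paper in explicitly tracking the union bound over the failure events of step~1 and of amplitude estimation.
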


\begin{proof}
The claim follows from Lemmas \ref{lem:probest}, \ref{lem:by}, and \ref{lem:by2}. In somewhat more detail: by Lemma \ref{lem:by2}, after step 3 of the algorithm, assuming that step 1 has succeeded, $|\widetilde{p}-p| \le \epsilon/300$ and $|p-r| \le \epsilon/300$, so $|\widetilde{p}-r| \le \epsilon/150$. By Lemma \ref{lem:by}, the output of the algorithm differs from $F_0$ by relative error $\epsilon$. Lemma \ref{lem:probest} states that approximating $p$ to the required level of accuracy can be done using $O(1/\epsilon)$ passes over the input. Step 1 uses space $O(\log m)$, and step 3 uses space $S + O(\log n + \log 1/\epsilon)$, where $S$ is the space required to specify a member of the family $\mathcal{H}$ of hash functions. $\mathcal{H}$ can be chosen such that $S = O(t \log m) = O(\log m \log(1/\epsilon))$, so the overall space usage is $O(\log m \log(1/\epsilon) + \log n)$ qubits.
\end{proof}



\subsection{$F_2$ and $F_k$, $k > 2$}
\label{sec:f2streaming}

To compute $F_2$ in the streaming model, we will apply the following result to ideas from the $F_2$ approximation algorithm of Alon, Matias and Szegedy~\cite{alon96}:

\begin{thm}[Quantum approximation with a bound on the relative variance~\cite{montanaro15}]
\label{thm:approxrel}
Let $v(\mathcal{A})$ be the distribution on the outputs of a quantum algorithm $\mathcal{A}$ such that $\E[v(\mathcal{A})^2]/\E[v(\mathcal{A})]^2 \le B$, for some $B \ge 1$, and $\mathcal{A}$ uses $S$ qubits of space. Then there is a quantum algorithm which approximates $\E[v(\mathcal{A})]$ up to additive error $\epsilon\, \E[v(\mathcal{A})]$, with probability at least $2/3$, and uses $\mathcal{A}$ and $\mathcal{A}^{-1}$ $O((B/\epsilon) \log^{3/2}(B/\epsilon) \log \log(B/\epsilon))$ times. The algorithm uses $O(S + \log(B/\epsilon))$ qubits of space.
\end{thm}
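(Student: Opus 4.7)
The plan is to reduce the problem to quantum amplitude estimation applied to circuits built from $\mathcal{A}$ and $\mathcal{A}^{-1}$, via a two-stage procedure: first obtain a coarse estimate of $\mu := \E[v(\mathcal{A})]$, then refine it to relative error $\epsilon$ using a dyadic-shell decomposition of the output range. The only quantum primitive is amplitude estimation, so the $S$-qubit space bound on $\mathcal{A}$ is preserved up to an $O(\log(B/\epsilon))$-sized counting register.

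For the coarse estimation I would use the bound from Markov applied to $v^2$, namely $\Pr[v > t\mu] \le B/t^2$, together with a Paley--Zygmund-type lower bound $\Pr[v \ge \mu/2] \ge \Omega(1/B)$. Running amplitude estimation on the indicator event $\{v \ge t\}$ for a doubling sequence of thresholds $t$, I would locate the scale at which $\Pr[v \ge t]$ crosses $\Theta(1/B)$, which pins $\mu$ down to a constant factor $\widetilde\mu = \Theta(\mu)$. Each amplitude estimation of an amplitude of order $1/B$ to constant relative precision uses $\widetilde{O}(\sqrt{B})$ queries; running these at $O(\log(B/\epsilon))$ thresholds and boosting to failure probability $1/\log(B/\epsilon)$ by median-of-repetitions yields the coarse estimate at cost $\widetilde{O}(B)$ queries.

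Given $\widetilde\mu$, I would partition $v$ into dyadic shells $v^{(j)} := v \cdot \mathbf{1}[2^{j-1}\widetilde\mu < v \le 2^j\widetilde\mu]$ for $j = 0,\dots,J$ with $J = O(\log(B/\epsilon))$, discarding the tail $v > 2^J\widetilde\mu$ whose contribution is $O(\epsilon\mu)$ by the variance bound. The estimate $\E[v^{(j)}] = O(B\mu/2^j)$ from $\Pr[v > t\mu] \le B/t^2$ tells me how much accuracy each shell requires. For each shell I would apply amplitude estimation to ``output of $\mathcal{A}$ lies in shell $j$'' (with the shell value folded into the output register), exploiting that amplitude estimation of a small amplitude $p$ to additive error $\eta$ uses only $\widetilde{O}(\sqrt{p}/\eta + 1/\eta)$ queries. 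Allocating an error budget $\eta_j = \Theta(\epsilon\mu/J)$ per shell and summing, the per-shell costs combine to $\widetilde{O}(\sqrt{B}/\epsilon)$, so together with the coarse phase the total is $\widetilde{O}(B/\epsilon)$. The polylogarithmic factors $\log^{3/2}(B/\epsilon) \log\log(B/\epsilon)$ arise from boosting each of the $J$ amplitude estimations to failure probability $O(1/J)$ and from the logarithmic slack in amplitude estimation's error bound.

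The main obstacle will be the error budgeting across the $J$ shells: one must verify that the $\sqrt{p}$ savings for small-probability shells genuinely combine, after summing over $J$ levels and after the per-call boosting, to yield the claimed $\widetilde{O}(B/\epsilon)$ rather than a naive $\widetilde{O}(B/\epsilon^2)$; relatedly, one must show that the coarse-stage accuracy $\widetilde\mu = \Theta(\mu)$ is enough to make the shell placement correct. The space bound $O(S + \log(B/\epsilon))$ is then straightforward: each invocation of $\mathcal{A}$ or $\mathcal{A}^{-1}$ uses $S$ qubits, shells are processed sequentially so their workspaces can be reused, and only the amplitude-estimation counting register persists.
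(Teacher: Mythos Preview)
This theorem is not proved in the paper; it is quoted as a black-box result from~\cite{montanaro15} and used without proof in Section~\ref{sec:f2streaming}. There is therefore no ``paper's own proof'' to compare against. Your proposal is an attempt to reconstruct the argument of~\cite{montanaro15}, and its high-level shape (coarse localisation of $\mu$, followed by a dyadic decomposition of the output values with amplitude estimation on each shell) is indeed along the lines of that paper.

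One point to watch in your coarse phase: the bounds $\Pr[v\ge\mu/2]\ge\Omega(1/B)$ and $\Pr[v\ge t\mu]\le B/t^{2}$ do \emph{not} by themselves pin $\mu$ down to a constant factor by locating where the tail crosses $\Theta(1/B)$, since the tail could cross that level anywhere in $[\mu/2,\,\Theta(\sqrt{B}\,\mu)]$. The original argument in~\cite{montanaro15} instead obtains the constant-factor estimate by a different route (essentially median-of-means on the output of $\mathcal{A}$), which is where the linear-in-$B$ cost enters; your bookkeeping ``$\widetilde O(\sqrt{B})\times\text{polylog}=\widetilde O(B)$'' is not arithmetic that should be trusted. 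Also note that your fine-phase accounting gives $\widetilde O(\sqrt{B}/\epsilon)$, strictly better than the theorem's $\widetilde O(B/\epsilon)$; if that were right the theorem would be suboptimal, so this is a sign that the coarse phase is doing more work than you have accounted for.
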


The algorithm of~\cite{alon96} uses a set of $M = O(m^2)$ 4-wise independent hash functions $h_i:[m] \rightarrow \{\pm 1\}$, and approximately computes the expected value of the function $f(i) = \left( \sum_{j=1}^m h_i(j) n_j \right)^2$ over the random choice of hash function $h_i$. This is sufficient to approximate $F_2$:

\begin{claim}[Alon, Matias and Szegedy~\cite{alon96}]
If $i \in [M]$ is picked uniformly at random, $\E_i[f(i)] = F_2$ and $\Var(f) \le 2F_2^2$.
\end{claim}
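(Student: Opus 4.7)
The plan is to expand the square in $f(i)$ and the fourth power in $f(i)^2$ and carefully track which cross terms vanish when we average over $i$, using the $4$-wise independence of $h_i$ together with the fact that each $h_i(j) \in \{\pm 1\}$ squares to $1$.

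For the expectation, I would write
\[ f(i) = \sum_{j,k=1}^m h_i(j) h_i(k) n_j n_k, \]
and then apply linearity and $2$-wise independence (a consequence of $4$-wise independence): when $j \neq k$, $\E_i[h_i(j) h_i(k)] = \E_i[h_i(j)]\, \E_i[h_i(k)] = 0$, and when $j = k$ we have $h_i(j)^2 = 1$. Hence only the diagonal terms survive and $\E_i[f(i)] = \sum_j n_j^2 = F_2$.

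For the variance bound, I would expand
\[ f(i)^2 = \sum_{j_1,j_2,j_3,j_4} h_i(j_1) h_i(j_2) h_i(j_3) h_i(j_4)\, n_{j_1} n_{j_2} n_{j_3} n_{j_4}. \]
By $4$-wise independence, $\E_i[h_i(j_1) h_i(j_2) h_i(j_3) h_i(j_4)]$ vanishes unless every index appears an even number of times in the multiset $\{j_1, j_2, j_3, j_4\}$, i.e.\ either all four indices coincide or they split into two disjoint equal pairs. The ``all four equal'' contribution is $\sum_j n_j^4 = F_4$, and there are $\binom{4}{2} = 3$ pairing patterns contributing $\sum_{j \neq k} n_j^2 n_k^2 = F_2^2 - F_4$ each. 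Summing gives $\E_i[f(i)^2] = 3F_2^2 - 2F_4$, so $\Var(f) = 3F_2^2 - 2F_4 - F_2^2 = 2(F_2^2 - F_4) \le 2 F_2^2$, since $F_4 \ge 0$.

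The only subtle step is the combinatorial bookkeeping in the second moment calculation — counting the three two-pair patterns correctly and making sure the ``all equal'' case is not double-counted — but this is routine once the $4$-wise independence is invoked. No deep obstacle is anticipated; the whole argument is a direct computation.
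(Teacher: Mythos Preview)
Your argument is correct and is exactly the classical computation of Alon, Matias and Szegedy that the paper cites (the paper itself does not reprove the claim, it simply attributes it to~\cite{alon96}). The expansion of the square, the use of $4$-wise independence to kill all terms except the ``all equal'' and ``two disjoint pairs'' patterns, and the resulting identity $\E_i[f(i)^2] = 3F_2^2 - 2F_4$ are precisely the steps in the original proof, and your bookkeeping of the three pairings is accurate with no double counting.
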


Here we would like to apply the algorithm of Theorem \ref{thm:approxrel} to accelerate this procedure. To do so, we need to compute $f$ reversibly and space-efficiently. For each hash function $h$, we can compute
\[ \sum_{i=1}^n h(a_i) = \sum_{j=1}^m h(j) n_j \]
using one pass over the stream. Further, we can compute $f(i)$ reversibly for any $i$ using two passes and space $O(\log m + \log n) = O(\log n)$. We first perform the map
\[ \ket{i}\ket{0}\ket{y} \mapsto \ket{i}\ket{\sum_{j=1}^n h_i(a_j)}\ket{y+f(i)}, \]
using one pass over the stream. We then use a second pass over the stream to subtract $h_i(a_j)$ for each $j$, so the state of the second register is effectively unchanged and we have performed the map $\ket{i}\ket{y} \mapsto \ket{i}\ket{y+f(i)}$. To carry out the inverse operation, we do the same thing in reverse.

We can therefore apply Theorem \ref{thm:approxrel} to $f$, and obtain the following result:

\begin{thm}
Algorithm~\ref{alg:ams} computes $F_2$ up to relative error $\epsilon$, with failure probability at most $1/3$, using space $O(\log n + \log(1/\epsilon))$ and
\[ O((1/\epsilon)\log^{3/2}(1/\epsilon) \log \log (1/\epsilon)) = \widetilde{O}(1/\epsilon) \]
passes over the input.
\end{thm}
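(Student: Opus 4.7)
The plan is to combine the Alon--Matias--Szegedy estimator for $F_2$ with the quantum mean-estimation subroutine of Theorem \ref{thm:approxrel}, exploiting the fact that the estimator has constant relative variance so the parameter $B$ in Theorem \ref{thm:approxrel} is $O(1)$.

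First I would verify the relative variance bound. By the quoted claim of Alon--Matias--Szegedy, $\E_i[f(i)] = F_2$ and $\Var(f) \le 2F_2^2$, so
\[ \frac{\E[f^2]}{\E[f]^2} = 1 + \frac{\Var(f)}{\E[f]^2} \le 3. \]
Hence we may apply Theorem \ref{thm:approxrel} with $B = 3 = O(1)$, which immediately yields a $\widetilde{O}(1/\epsilon)$ dependence on the number of calls to the underlying quantum algorithm.

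Next I would package $f$ as a quantum algorithm $\mathcal{A}$ whose output distribution has expectation $F_2$. Concretely, $\mathcal{A}$ prepares a uniform superposition over $i \in [M]$, then computes $f(i)$ reversibly by the two-pass procedure already described in the preceding paragraphs: one pass to accumulate $\sum_j h_i(a_j)$ into a register (squaring it in place to obtain $f(i)$), followed by a second pass that subtracts $h_i(a_j)$ to reset the accumulator. This subroutine requires only $O(\log n + \log m) = O(\log n)$ qubits (using $\log m = O(\log n)$, consistent with the assumption in the introduction), plus whatever is needed to index into the $4$-wise independent family. A standard $4$-wise independent family over $[m]$ can be specified using $O(\log m) = O(\log n)$ bits, so the total working space of $\mathcal{A}$ is $S = O(\log n)$.

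Finally I would invoke Theorem \ref{thm:approxrel} to estimate $\E[v(\mathcal{A})] = F_2$ up to additive error $\epsilon\, F_2$ with failure probability at most $1/3$. The number of uses of $\mathcal{A}$ and $\mathcal{A}^{-1}$ is $O((1/\epsilon)\log^{3/2}(1/\epsilon) \log\log(1/\epsilon))$; since each use costs two passes over the stream, the total pass complexity is of the same order. The space usage is $O(S + \log(B/\epsilon)) = O(\log n + \log(1/\epsilon))$, as claimed. The only subtle point to check carefully is that Theorem \ref{thm:approxrel} really needs a worst-case space bound on $\mathcal{A}$ (not an expected one) and reversibility of $\mathcal{A}$, both of which our two-pass implementation satisfies; this is also why, as the preceding section remarks, the analogous speedup cannot be applied to the query-model algorithm for $F_k$.
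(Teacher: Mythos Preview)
Your proposal is correct and follows essentially the same approach as the paper: compute the AMS estimator $f(i)$ reversibly with two passes and $O(\log n)$ space, observe that $\E[f^2]/\E[f]^2 \le 3$ so that $B=O(1)$, and then invoke Theorem~\ref{thm:approxrel} to obtain the stated pass and space bounds. The only minor imprecision is the phrase ``squaring it in place'': the square of the accumulated sum is written into a separate output register (and the sum then uncomputed on the second pass), not computed in place, but this does not affect the argument.
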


\boxalgm{alg:ams}{Streaming estimation of $F_2$, based on~\cite{alon96}}{
\begin{enumerate}
\item Let $\mathcal{H}$ be a family of $O(m^2)$ 4-wise independent hash functions $h_j: [m] \rightarrow \{\pm1\}$.
\item Apply the algorithm of Theorem \ref{thm:approxrel} with accuracy bound $\epsilon$ to the following subroutine:
\begin{enumerate}
\item Pick $h \in \mathcal{H}$ uniformly at random.
\item Output $\left( \sum_{i=1}^n h(a_i) \right)^2$.
\end{enumerate}
\end{enumerate}
}

In the case of other moments $F_k$, for fixed $k > 2$, we can do something similar (but less efficient), using a different estimator described by Alon, Matias and Szegedy~\cite{alon96}. For $i \in [n]$, let $N(i) = |\{j : j \ge i, a_i = a_j \}|$, and let $N_k(i) = n(N(i)^k - (N(i)-1)^k)$. Then the following lemma holds:

\begin{lem}[Alon, Matias and Szegedy~\cite{alon96}]
If $i \in [n]$ is picked uniformly at random, then
\[ \E_i[N_k(i)] = F_k,\;\;\;\;\Var(N_k) \le k m^{1-1/k} F_k^2. \]
\end{lem}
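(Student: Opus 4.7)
The plan is to reindex the sum $\sum_i N_k(i)$ by grouping positions according to their value. For each value $v$ appearing in the stream with multiplicity $n_v$, list the positions $i$ where $a_i=v$ in increasing order, say $i_1<i_2<\cdots<i_{n_v}$. Then $N(i_t)=n_v-t+1$, so as $t$ ranges over $1,\dots,n_v$, $N(i_t)^k-(N(i_t)-1)^k$ takes the values $s^k-(s-1)^k$ for $s=1,\dots,n_v$ (in reverse). By telescoping,
\[ \sum_{t=1}^{n_v} \bigl(N(i_t)^k-(N(i_t)-1)^k\bigr) = \sum_{s=1}^{n_v}\bigl(s^k-(s-1)^k\bigr) = n_v^k. \]
Multiplying by $n$ and summing over $v$ gives $\sum_{i=1}^n N_k(i) = n\sum_v n_v^k = n F_k$, so $\E_i[N_k(i)] = F_k$.

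For the variance, I would use $\Var(N_k)\le \E[N_k^2]=\frac{1}{n}\sum_i N_k(i)^2$ and apply the same reindexing:
\[ \sum_{i=1}^n N_k(i)^2 = n^2 \sum_v \sum_{s=1}^{n_v}\bigl(s^k-(s-1)^k\bigr)^2. \]
The key pointwise bound is $s^k-(s-1)^k\le k s^{k-1}\le k n_v^{k-1}$ for $1\le s\le n_v$ (immediate from the mean value theorem, or by expanding the binomial). Hence
\[ \sum_{s=1}^{n_v}\bigl(s^k-(s-1)^k\bigr)^2 \le k n_v^{k-1}\sum_{s=1}^{n_v}\bigl(s^k-(s-1)^k\bigr) = k n_v^{2k-1}, \]
so $\E[N_k^2] \le n k\sum_v n_v^{2k-1} = nk F_{2k-1}$.

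It remains to turn $nF_{2k-1}$ into the advertised $m^{1-1/k}F_k^2$. First, since $n_v\le F_\infty$ and $F_\infty\le F_k^{1/k}$ (because $F_k\ge F_\infty^k$), one has
\[ F_{2k-1} = \sum_v n_v^{k-1}\cdot n_v^k \le F_\infty^{k-1} F_k \le F_k^{(k-1)/k}F_k = F_k^{2-1/k}. \]
Second, the power–mean inequality applied with exponents $1\le k$ gives $F_1/m\le (F_k/m)^{1/k}$, i.e.\ $n\le m^{1-1/k}F_k^{1/k}$. Multiplying, $n F_{2k-1}\le m^{1-1/k} F_k^2$, and therefore $\Var(N_k)\le km^{1-1/k}F_k^2$, as claimed.

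The only nonroutine step is the final combination at the end: we have a very explicit expression $nkF_{2k-1}$ for the second moment, but we want a bound involving $F_k^2$ and $m$. The main obstacle is picking the right inequalities — here, the combination of the trivial bound $F_{2k-1}\le F_\infty^{k-1}F_k$ with the two power–mean estimates $F_\infty\le F_k^{1/k}$ and $F_1\le m^{1-1/k}F_k^{1/k}$ — since a single direct application of Hölder does not quite give the stated constant. Everything else is an elementary telescoping calculation.
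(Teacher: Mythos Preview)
The paper does not actually prove this lemma; it is quoted verbatim from Alon, Matias and Szegedy~\cite{alon96} and used as a black box. Your argument is correct and is essentially the original AMS proof: telescope $\sum_{s=1}^{n_v}(s^k-(s-1)^k)=n_v^k$ for the expectation, bound $(s^k-(s-1)^k)^2\le k n_v^{k-1}(s^k-(s-1)^k)$ to get $\E[N_k^2]\le knF_{2k-1}$, and finish with the two norm inequalities $F_{2k-1}\le F_k^{2-1/k}$ and $n=F_1\le m^{1-1/k}F_k^{1/k}$.
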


It is clear that $N_k(i)$ can be computed reversibly using two passes over the stream (one to compute $N(i)$ and one to uncompute it), using space $O(\log n)$. Using Theorem \ref{thm:approxrel}, we can approximate $F_k$ up to additive error $\epsilon F_k$ using
\[ O((m^{1-1/k} / \epsilon) \log^{3/2}(m^{1-1/k} / \epsilon)\log \log(m^{1-1/k} / \epsilon) ) = \widetilde{O}(m^{1-1/k} / \epsilon) \]
passes and space $O(\log n + \log(m^{1-1/k} /\epsilon))$. This is sometimes superior to the best classical algorithms~\cite{braverman14}, but only for very small $\epsilon \le m^{-1/k}$.


\subsection{$F_\infty$}

The quantum streaming algorithm for computing $F_\infty$ is straightforward, based on a streaming implementation of the maximum-finding algorithm of D\"urr and H\o yer~\cite{durr96}. Using one pass over the stream, we can implement the map
\[ \ket{j}\ket{x} \mapsto \ket{j}\ket{x\pm n_j} \]
for any $j$ simply by adding (or subtracting) 1 to $x$ each time we see an element with value $j$. We can use this as an oracle within the quantum algorithm of D\"urr and H\o yer, which outputs the maximum of $N$ integers, using quantum space $O(\log^2 N)$ and $O(\sqrt{N})$ queries~\cite{durr96}. This immediately gives the following result:

\begin{thm}
There is a quantum algorithm which computes $F_\infty$ exactly, with failure probability at most $1/3$, using space $O(\log^2 m)$ and $O(\sqrt{n})$ passes over the input.
\end{thm}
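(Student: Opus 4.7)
My plan is to combine D\"urr and H\o yer's quantum maximum-finding algorithm with a streaming implementation of an oracle for the collision counts. The key idea is to define $g\colon [n]\to \mathbb{Z}_{\ge 0}$ by $g(i) = n_{a_i}$, so that $\max_{i\in[n]} g(i) = F_\infty$, since the maximum is attained at every index $i$ for which $a_i$ is a most-frequent value. D\"urr-H\o yer, applied with domain size $N = n$, uses $O(\sqrt{n})$ oracle calls and $O(\log^2 n)$ qubits of workspace, so it suffices to realise a reversible oracle $\ket{i}\ket{x} \mapsto \ket{i}\ket{x + g(i)}$ using a constant number of passes over the stream and only $O(\log m)$ extra qubits.

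I would implement the oracle with three passes. In the first pass, the algorithm maintains a counter register that increments on each arrival and, at the step when it equals $i$, XORs the incoming stream element into a fresh ancilla register; the counter is then uncomputed by the end of the pass, leaving the ancilla in the state $\ket{a_i}$. In the second pass, for each arriving element $a_k$, the algorithm increments the target register when $a_k$ matches the contents of the ancilla, so that the target finishes in $\ket{x + n_{a_i}}$. In the third pass, the first pass is run in reverse to erase $\ket{a_i}$. All three passes decompose into unitaries $U_a$ acting on the persistent quantum state and depending only on the current element, as demanded by the streaming model.

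Putting it together, the total pass count is $O(\sqrt{n}) \cdot O(1) = O(\sqrt{n})$, and the space usage is $O(\log^2 n)$ for D\"urr-H\o yer's workspace, $O(\log m)$ for the ancilla holding $a_i$, and $O(\log n)$ for the counter and target register, which sums to $O(\log^2 m)$ (using the standing assumption $m = O(n^2)$, which is justified earlier in the paper). D\"urr-H\o yer succeeds with probability at least $2/3$. The one non-routine point, and the main obstacle to the argument, is keeping the oracle genuinely unitary: naive reversible simulation would accumulate $\Theta(n)$ garbage bits across the stream, and avoiding this is precisely what forces the three-pass ``load $a_i$, count, uncompute $a_i$'' structure described above.
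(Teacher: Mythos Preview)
Your approach is correct and follows the same core idea as the paper: implement a streaming oracle for the frequency counts and feed it into D\"urr--H\o yer maximum finding. The one difference worth noting is the choice of search domain. The paper builds its oracle directly over \emph{values} $j\in[m]$, implementing $\ket{j}\ket{x}\mapsto\ket{j}\ket{x\pm n_j}$ in a single pass by adding or subtracting $1$ whenever the incoming element equals $j$; you instead index by \emph{positions} $i\in[n]$ and compute $g(i)=n_{a_i}$, which costs three passes per oracle call (load $a_i$, count matches, uncompute $a_i$). The paper's oracle is simpler, but its search domain has size $m$, so taken literally D\"urr--H\o yer would use $O(\sqrt{m})$ calls; your indexing keeps the domain size at $n$ and thus cleanly yields the $O(\sqrt{n})$ pass bound stated in the theorem. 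Either route gives the same qualitative result, and your attention to keeping the oracle reversible without accumulating garbage is exactly the right concern in this model.
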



\subsection{Lower bounds}

Just as in the classical world, lower bounds on quantum communication complexity (see e.g.~\cite{buhrman10} for an introduction) can be used to lower-bound the quantum complexity of computing functions in the streaming model. Alice and Bob divide the input $a_1,\dots,a_n,b_1,\dots,b_n$ into two; Alice gets the first half $a_1,\dots,a_n$, Bob the second half $b_1,\dots,b_n$. If there is a streaming algorithm which computes $F_k$ using $T$ passes over the input and stores $S$ qubits, by simulating this algorithm Alice and Bob obtain a two-way quantum communication protocol which communicates $O(TS)$ qubits in total, has no prior shared randomness or entanglement, and computes $F_k$. If there is a single-pass streaming algorithm, this gives a one-way quantum communication protocol.

We first observe that a bound of Alon, Matias and Szegedy~\cite{alon96} extends to give a general $\Omega(\log n)$ space lower bound in the quantum setting, and an $\Omega(\sqrt{n})$ bound for $k = \infty$. Throughout this section we assume that $m \ge 2n$.

\begin{thm}
Assume there exists a protocol in the multi-pass quantum streaming model which stores $S$ qubits and uses $T$ passes to compute $F_k$ for a stream of $n$ elements up to relative error $1/8$, with failure probability $1/3$. Then:
\begin{itemize}
\item if $k \neq 1$, $TS = \Omega(\log n)$;
\item if $k=\infty$, $TS = \Omega(\sqrt{n})$.
\end{itemize}
\end{thm}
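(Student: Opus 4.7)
My plan is to derive both bounds via the standard reduction from multi-pass quantum streaming to two-party quantum communication complexity. Split the stream at position $\lfloor n/2\rfloor$: Alice holds the first half, Bob the second. On each of the $T$ passes, Alice applies the streaming algorithm's updates to her half, sends the $S$-qubit workspace to Bob, who applies his updates and sends it back. The total communication across the simulation is $O(TS)$ qubits, with no shared randomness or entanglement (since the streaming algorithm has none). Consequently, any quantum communication lower bound $\Omega(L)$ on a problem decidable from a $1/8$-relative-error approximation of $F_k$ on such a split stream forces $TS = \Omega(L)$.

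For the $\Omega(\sqrt n)$ bound at $k = \infty$, the plan is to reduce from Set Disjointness on $N = \lfloor n/2 \rfloor$-bit inputs, whose quantum communication complexity is $\Omega(\sqrt N)$ by Razborov's bound. Given $x,y \in \{0,1\}^N$, Alice writes the elements of $\{i \in [N] : x_i = 1\}$ into her half of the stream in any order, then pads to length $n/2$ with fresh distinct ``dummy'' integers drawn from $[m]\setminus [N]$; Bob does likewise with $y$, drawing his dummies from a subset of $[m]\setminus[N]$ disjoint from Alice's pool. The assumption $m \ge 2n$ guarantees enough room for both dummy pools, and since each dummy appears exactly once and collides with nothing else, $F_\infty$ is unaffected by the padding. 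Thus $F_\infty = 1$ iff $x$ and $y$ are disjoint and $F_\infty \ge 2$ otherwise, so a $1/8$-relative-error estimate (which in particular distinguishes $1$ from $2$) decides Disjointness, giving $TS = \Omega(\sqrt n)$.

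For the $\Omega(\log n)$ bound at $k \ne 1$, I would reduce from Equality on inputs from $[m]$. Alice and Bob hold $x, y \in [m]$; Alice streams $\lfloor n/2 \rfloor$ copies of $x$ and Bob $\lceil n/2 \rceil$ copies of $y$. If $x = y$ then a single element appears $n$ times and $F_k = n^k$; if $x \ne y$ then two distinct elements each appear roughly $n/2$ times, giving $F_k \approx n^k/2^{k-1}$ (with $F_0 = 1$ vs.\ $2$ handling the $k = 0$ case). For any $k \ne 1$ these differ by a multiplicative constant bounded away from $1$, so a $1/8$-relative-error estimate of $F_k$ decides Equality on elements of $[m]$. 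The step I expect to require the most care is pinning down the matching lower bound: the textbook $\Omega(\log N)$ quantum lower bound for Equality on $N$-bit inputs applied with $N = \lceil \log m\rceil = \Theta(\log n)$ only yields $\Omega(\log\log n)$ from the communication reduction alone, so to reach the claimed $\Omega(\log n)$ I would supplement the reduction with the simple output-size observation that the streaming algorithm's final measurement must produce a $\Theta(\log n)$-bit binary representation of an estimate in $[1, n^k]$, which already forces $S = \Omega(\log n)$ and hence $TS = \Omega(\log n)$ regardless of $T$.
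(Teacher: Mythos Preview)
Your $k=\infty$ argument is correct and is exactly the paper's reduction from Disjointness; the dummy-padding detail you add is harmless.

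For $k\neq 1$, however, there is a genuine gap. Your communication reduction from Equality on $\lceil\log m\rceil$-bit inputs indeed only yields $\Omega(\log\log n)$, as you note. The patch you propose---that the final measurement must produce a $\Theta(\log n)$-bit binary representation of a number in $[1,n^k]$ and hence $S=\Omega(\log n)$---does not hold. A $1/8$-relative-error estimate of $F_k$ only needs to land in one of $\Theta(\log n)$ multiplicative ``bins'' covering the range of $F_k$; information-theoretically that is $\Theta(\log\log n)$ bits, not $\Theta(\log n)$. Concretely, the algorithm could store only a bin index in its $S$ qubits and, after measurement, have free classical post-processing expand that index into a long integer. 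Nothing in the model forces the full binary expansion of $\widetilde{F_k}$ to reside in the $S$-qubit workspace. So your output-size step does not establish $S=\Omega(\log n)$.

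The paper closes this gap by making the Equality instance itself carry $\Theta(n)$ bits. It fixes a family of $2^{\Omega(n)}$ subsets of $[n]$, each of size $n/4$, with pairwise intersection at most $n/8$ (a constant-rate, constant-distance code). Alice and Bob map their $\Theta(n)$-bit inputs $x,y$ to subsets $S_x,S_y$ and stream the concatenation $S_x\circ S_y$. If $x=y$, every element has multiplicity $2$ and $F_k=n\,2^{k-2}$; if $x\neq y$, at most $n/8$ elements have multiplicity $2$ and one checks $F_0\ge 3n/8$ while $F_k\le n/4+n\,2^{k-3}$ for $k\ge 2$. In all cases $k\neq 1$ there is a constant-factor gap, so a $1/8$-approximation decides Equality on $\Theta(n)$-bit strings, whose quantum communication complexity is $\Omega(\log n)$. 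This is the step your proposal is missing.
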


\begin{proof}
In the proof we use $\circ$ to denote the concatenation operation on integer sequences.
\begin{itemize}
\item ($k\neq 1$): Alice and Bob will embed the equality function on $\Theta(n)$ bits in their inputs. Choose a family of $2^{\Omega(n)}$ subsets of $[n]$ of size $n/4$ such that each pair of subsets has at most $n/8$ elements in common. Then, if Alice receives input $x$, Bob receives input $y$, they encode these as subsets $S_x$, $S_y$. If their strings are equal, $F_k(S_x \circ S_y) = n 2^{k-2}$; otherwise, $F_0(S_x \circ S_y) \ge 3n/8$ and $F_k(S_x \circ S_y) \le n/4 + n 2^{k-3}$. For any $k \neq 1$, there is at least a constant factor gap between the values of $F_k$ in these two different cases. In particular, approximating $F_k$ up to relative error $1/8$ allows equality of $x$ and $y$ to be tested. This has an $\Omega(\log n)$ quantum communication complexity lower bound~\cite{kremer95}.
\item ($k=\infty$): Alon, Matias and Szegedy~\cite{alon96} give a reduction from Disjointness, which we repeat here. Given sets $S_a$, $S_b$, Alice and Bob simply apply the streaming algorithm to the concatenation $S_a \circ S_b$. If there are any elements in common, $F_\infty(S_a \circ S_b) \ge 2$; otherwise, $F_\infty(S_a \circ S_b) = 1$. So computing $F_\infty$ up to relative error $\epsilon$, for any $\epsilon<1/3$, allows Alice and Bob to determine whether their sets are disjoint. This has a quantum communication complexity lower bound of $\Omega(\sqrt{n})$~\cite{razborov03}.
\end{itemize}
\end{proof}

There is also a straightforward bound on the complexity of exact computation.

\begin{thm}
Assume there exists a protocol in the multi-pass quantum streaming model which stores $S$ qubits and uses $T$ passes to compute $F_k$ exactly for a stream of $n$ elements, with failure probability $1/3$. Then, if $k \notin \{0,1,\infty\}$, $TS = \Omega(n)$.
\end{thm}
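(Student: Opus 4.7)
The plan is to reduce exact computation of $F_k$ to the Inner Product Mod 2 problem (IP), whose bounded-error two-way quantum communication complexity on $n'$-bit inputs is $\Omega(n')$, by Cleve, van Dam, Nielsen and Tapp. Set $n' = n/2$ (adjusting constants as needed). Given $x,y \in \{0,1\}^{n'}$, Alice builds the stream $S_x$ consisting of the indices $i \in [n']$ with $x_i = 1$, and Bob builds $S_y$ analogously from $y$. They pad $S_x \circ S_y$ with pairwise distinct ``dummy'' elements drawn from $[m]\setminus[n']$ so that the total length is exactly $n$; since the padding is preagreed, both players know its contents. In the combined stream each $j \in [n']$ has multiplicity $x_j + y_j \in \{0,1,2\}$ and each dummy has multiplicity one, so writing $c = |\{j \in [n'] : x_j = y_j = 1\}| = \langle x,y \rangle$ as an integer and $D$ for the number of dummies,
\[
F_k \;=\; (|x|+|y|-2c) + 2^k c + D \;=\; |x|+|y| + (2^k-2)\,c + D.
\]

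For $k \notin \{0,1,\infty\}$ we have $k \geq 2$ and hence $2^k - 2 \geq 2$; so once $|x|, |y|, D$ are known, the exact value of $F_k$ determines the integer $c$, and therefore $\mathrm{IP}(x,y) = c \bmod 2$. Converting the streaming algorithm into a quantum communication protocol in the standard way, Alice first sends $|x|$ and receives $|y|$ using $O(\log n)$ classical bits (both players can now compute $D$), and then for each of the $T$ passes Alice processes $S_x$ on the $S$-qubit working register, transmits it to Bob, who processes $S_y$ (including the preagreed dummies) and returns the register. The final measurement yields $F_k$ with probability at least $2/3$, so both players recover $\mathrm{IP}(x,y)$ with the same probability using $O(TS + \log n)$ qubits of communication in total. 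The $\Omega(n)$ lower bound for IP then forces $TS = \Omega(n)$.

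The argument is essentially a single reduction; the only non-routine ingredient is the identity displayed above, which turns the exact value of $F_k$ into the exact integer $c$ whenever $2^k \neq 2$, i.e.\ whenever $k \neq 1$. The exclusion of $k = \infty$ is unavoidable here because $F_\infty$ reveals only whether $c \geq 1$ (giving a reduction from Disjointness with its weaker $\Omega(\sqrt{n})$ lower bound), so the full strength of IP cannot be extracted in that case.
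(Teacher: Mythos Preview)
Your proof is correct. The overall strategy---encode a communication problem with $\Omega(n)$ bounded-error quantum communication complexity into the stream so that exact $F_k$ reveals the answer---is the same as the paper's, but the specific reduction differs.

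The paper reduces from exact Hamming distance (citing Huang et al.\ for the $\Omega(n)$ quantum lower bound). Alice sets her $i$'th stream element to $i+nx_i$ and Bob sets his $i$'th element to $i+ny_i$; each index $i$ then contributes either one element of multiplicity $2$ (if $x_i=y_i$) or two elements of multiplicity $1$ (if $x_i\neq y_i$), giving $F_k = n\,2^k - (2^k-2)\,d(x,y)$, from which $d(x,y)$ is read off whenever $k\neq 1$. You instead reduce from Inner Product, streaming the supports of $x$ and $y$ together with dummy elements, and recover the integer $c=\sum_i x_iy_i$ from $F_k = |x|+|y|+(2^k-2)c+D$.

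The two reductions are essentially dual (since $d(x,y)=|x|+|y|-2c$), and both hinge on the same nonvanishing coefficient $2^k-2$. The paper's encoding is slightly tidier: each party's substream has fixed length independent of the input, so no padding and no preliminary $O(\log n)$-bit exchange of Hamming weights is needed. Your version handles this correctly but at the cost of that extra step. On the other hand, IP is perhaps the more canonical hard problem in quantum communication, and the Cleve--van Dam--Nielsen--Tapp bound is the standard reference. Either route gives the claimed $TS=\Omega(n)$.
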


\begin{proof}
We reduce from the problem of computing the Hamming distance between two bit-strings $x,y \in \{0,1\}^n$, which has a quantum communication complexity lower bound of $\Omega(n)$~\cite{huang06}. Given $x$, Alice produces an $n$-element sequence by setting the $i$'th element to be $i + nx_i$, and Bob produces a similar sequence whose $i$'th element is $i + ny_i$. Then for any $k \notin \{0,1,\infty\}$, $F_k$ of the concatenated sequence is precisely $n 2^k - (2^k-1)d(x,y)$, so determining $F_k$ exactly enables the Hamming distance between $x$ and $y$ to be computed.
\end{proof}

We now prove a general quantum lower bound on approximating $F_k$ in the streaming model, based on a sequence of reductions, all following from previously known results. The key point is that good approximations to $F_k$ are known to give efficient protocols, in the setting of two-way communication complexity, for a problem known as Gap-Hamming-Distance~\cite{indyk03,woodruff04,chakrabarti12}. Let $\text{GHD}_{n,t}$ be the partial function defined on a subset of $\{0,1\}^n \times \{0,1\}^n$ as follows:
\[ \text{GHD}_{n,t}(x,y) = \begin{cases} 0 & \text{if } d(x,y) \le t - \sqrt{n}\\ 1 & \text{if } d(x,y) \ge t + \sqrt{n} \end{cases} \]
%
Then we have the following sequence of claims:

\begin{claim}[from Indyk and Woodruff~\cite{indyk03}, Woodruff~\cite{woodruff04}]
\label{claim:iy}
Fix $k \notin \{1,\infty\}$. Assume there is a protocol in the multi-pass quantum (resp.\ classical) streaming model which stores $S$ qubits (resp.\ bits) and uses $T$ passes to approximate $F_k$ up to relative error $\epsilon$, with failure probability $p$, for a stream of $n$ elements picked from a universe of size $m$. Then, if $\epsilon \ge 1/\sqrt{m}$, there is $\ell = \Theta(1/\epsilon^2)$ such that there is a quantum (resp.\ classical) protocol in the 2-way communication model for $\text{GHD}_{\ell,\ell/2}$, which has failure probability $p$ and uses $O(TS)$ qubits of communication.
\end{claim}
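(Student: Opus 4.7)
The plan is to reduce $\text{GHD}_{\ell,\ell/2}$ to approximating $F_k$ on a stream of length $n = 2\ell$ over a universe of size $m \ge 2\ell$. Given inputs $x,y \in \{0,1\}^\ell$, Alice produces the stream $a_1,\dots,a_\ell$ with $a_i = 2i - x_i$ (so she writes $2i$ if $x_i=0$ and $2i-1$ if $x_i=1$), and Bob similarly produces a stream from $y$ using the same encoding. Concatenating the two streams gives a length-$2\ell$ sequence over the universe $\{1,\dots,2\ell\}$, in which a value is repeated (with multiplicity $2$) precisely when $x_i = y_i$, and otherwise appears with multiplicity $1$. Thus if $a = \ell - d(x,y)$ is the number of agreements between $x$ and $y$, one has $n_j \in \{1,2\}$ for every $j$, and
\[
F_k \;=\; 2^k a + 2(\ell - a) \;=\; 2\ell + (2^k-2)\bigl(\ell - d(x,y)\bigr).
\]
(For $k=0$ the same calculation gives $F_0 = \ell + d(x,y)$.) In both cases $F_k$ is an affine function of $d(x,y)$ with a nonzero coefficient whenever $k \notin \{1,\infty\}$.

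Next I would plug in the two GHD cases $d(x,y) = \ell/2 \mp \sqrt{\ell}$ and compute the resulting gap in $F_k$. The value of $F_k$ lies in an interval of width $2|2^k-2|\sqrt{\ell}$ around a center of order $\Theta_k(\ell)$, so the relative gap in $F_k$ between the two cases is $\Theta_k(1/\sqrt{\ell})$. Hence there is a constant $c_k > 0$ such that any $(1+\epsilon)$-approximation to $F_k$ with $\epsilon \le c_k/\sqrt{\ell}$ suffices to decide $\text{GHD}_{\ell,\ell/2}$; taking $\ell = \lceil (c_k/\epsilon)^2 \rceil = \Theta(1/\epsilon^2)$ solves this system. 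The hypothesis $\epsilon \ge 1/\sqrt{m}$ is precisely what is needed to ensure the universe is large enough, since the encoding uses $2\ell = \Theta(1/\epsilon^2) \le m$ distinct symbols.

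Finally I would simulate the assumed streaming algorithm to obtain a two-way communication protocol. Alice initialises the algorithm's state on $S$ qubits (or bits), applies the updates corresponding to her prefix $a_1,\dots,a_\ell$, sends the $S$-qubit state to Bob, who applies the updates for his suffix and, at the end of the first pass, sends the $S$-qubit state back. Repeating this for all $T$ passes uses $O(TS)$ qubits of communication in total, preserves the failure probability, and produces the required $\text{GHD}$ decision. The same simulation works in the classical case with classical messages of $S$ bits.

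The main obstacle is simply to pin down the $k$-dependent constants in the affine relation between $F_k$ and $d(x,y)$ so that the claim holds uniformly for all fixed $k \notin \{1,\infty\}$ (including $k=0$, which needs the separate computation above), and to verify that the streaming-to-communication simulation gives precisely the $O(TS)$ bound with no hidden overhead from, e.g., having to recompute which of the $T$ passes one is in — this is handled by letting the shared state implicitly carry the pass counter, whose cost is absorbed in the $O(\log T) = O(\log n)$ part of $S$.
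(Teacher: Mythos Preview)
The paper does not supply its own proof of this claim; it is simply stated with attribution to Indyk--Woodruff~\cite{indyk03} and Woodruff~\cite{woodruff04}. Your argument is precisely the standard reduction from those references: encode each coordinate $i$ as a pair of universe elements so that an agreement $x_i=y_i$ produces one element of multiplicity $2$ and a disagreement produces two singletons, observe that $F_k$ is then affine in $d(x,y)$ with relative gap $\Theta_k(1/\sqrt{\ell})$ between the two GHD cases, and simulate the $T$-pass, $S$-qubit streaming algorithm by shipping its state back and forth $2T$ times. This is correct.

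Two minor points you may want to tidy up. First, the stream you construct has length $2\ell = \Theta(1/\epsilon^2)$, not $n$; implicitly one needs $n \ge 2\ell$ (which the paper tacitly assumes as well), and if the streaming algorithm is only promised to work on inputs of length exactly $n$ you can pad with fresh distinct symbols---this only shifts both values of $F_k$ by the same additive $n-2\ell$ and preserves the gap, while the baseline value remains $\Theta_k(n)$, so choosing $\ell = \Theta_k(1/\epsilon^2)$ with a suitable constant still works. Second, the pass counter need not be carried inside the $S$-qubit state at all: in the communication simulation both parties know which round they are in, so there is no hidden overhead to worry about.
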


\begin{claim}[Chakrabarti and Regev~\cite{chakrabarti12}]
\label{claim:cr}
Fix $\alpha \in (0,1/2]$. Then, if there is a quantum (resp.\ classical) communication protocol for $\text{GHD}_{n,n/2}$ using $c$ qubits (resp.\ bits) of communication with failure probability $p$, there is a quantum (resp.\ classical) communication protocol for $\text{GHD}_{n,\alpha n}$ using $O(c)$ qubits (resp.\ bits) of communication with failure probability $p$.
\end{claim}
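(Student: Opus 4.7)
The plan is to reduce $\text{GHD}_{n,\alpha n}$ to $\text{GHD}_{n',n'/2}$ for some $n' = \Theta(n)$, via a deterministic, communication-free local transformation; Alice and Bob then invoke the hypothesised protocol on the transformed inputs. Because the transformation is purely classical and local, the same argument applies verbatim in the quantum and classical settings.

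\textbf{The reduction.} Fix the integer $r = 2$ and set $s = \lfloor rn(1-2\alpha)\rfloor$, which is a nonnegative integer since $\alpha \le 1/2$. Given $x, y \in \{0,1\}^n$, Alice locally forms $x'$ by concatenating $r$ copies of $x$ followed by $0^s$, and Bob forms $y'$ by concatenating $r$ copies of $y$ followed by $1^s$. The new length is $n' = rn + s$, which is at most $4n$ and at least $2n$ throughout $\alpha \in (0,1/2]$. No communication is used.

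\textbf{Distance and gap.} By construction $d(x',y') = r\,d(x,y) + s$, and the choice of $s$ makes $r\alpha n + s$ coincide with $n'/2$ up to $O(1)$ rounding. Hence $d(x,y) \le \alpha n - \sqrt{n}$ implies $d(x',y') \le n'/2 - r\sqrt{n} + O(1)$, and symmetrically for the other branch of the promise. For these to fall inside the $\text{GHD}_{n',n'/2}$ promise I need $r\sqrt{n} \ge \sqrt{n'}$ (the $O(1)$ slack is dominated by $\sqrt{n}$ for large $n$), i.e.\ $r \ge 2(1-\alpha)$. Since $\alpha \in (0,1/2]$ forces $2(1-\alpha) \le 2$, the choice $r = 2$ is sufficient uniformly. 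Alice and Bob then simply run the hypothesised $\text{GHD}_{n',n'/2}$ protocol on $(x',y')$; it uses $c$ qubits (resp.\ bits) on length-$n'$ inputs, and since $n' = \Theta(n)$ this is $O(c)$ in the sense of the statement. The failure probability is unchanged because the reduction is error-free.

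\textbf{Expected main obstacle.} The subtle step is verifying the gap: padding alone (without the repetition factor $r$) preserves the absolute gap $\sqrt{n}$ but fails because the reduced instance has larger length $n'$ and hence demands a larger gap $\sqrt{n'}$. Repeating each input $r$ times multiplies the gap by $r$ while only scaling the length by about $2(1-\alpha)$, which leaves exactly enough slack to accommodate any $\alpha \in (0,1/2]$ with the small constant $r = 2$.
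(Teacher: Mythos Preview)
The paper does not actually prove this claim; it is quoted as a black-box result from Chakrabarti and Regev~\cite{chakrabarti12} and then invoked in the proof of the subsequent theorem. So there is no ``paper's own proof'' to compare against here.

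That said, your argument is the standard repeat-and-pad reduction and is essentially correct. Two minor points are worth tightening. First, your reduction produces an instance of $\text{GHD}_{n',n'/2}$ with $n' \ne n$, whereas the claim as stated keeps the input length fixed at $n$; you are implicitly assuming the hypothesised protocol is available for every input length and that its cost at length $\Theta(n)$ is $O(c)$. This is the intended reading in context (and is how the claim is used downstream), but you should say so explicitly. Second, your slack analysis gives $2\sqrt{n} - \sqrt{n'} \ge 2\sqrt{n}\,(1-\sqrt{1-\alpha}) = \Theta(\alpha\sqrt{n})$, which for fixed $\alpha>0$ does dominate the $O(1)$ rounding error for large $n$, but note that the bound degrades as $\alpha \to 0$; since $\alpha$ is fixed this is fine, and small $n$ can be handled trivially. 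With those caveats made explicit, the argument is sound.
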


\begin{claim}[Razborov~\cite{razborov03}, see~\cite{klauck07a} for version here]
\label{claim:razborov}
Fix a $Q$-qubit quantum communication protocol on $n$-bit inputs $x$, $y$, with acceptance probabilities $P(x,y)$. Write $P(i) = \E_{|x|=|y|=n/4,|x \wedge y|=i} [P(x,y)]$. Then, for every $d \le n/4$, there exists a degree-$d$ polynomial $q$ such that $|P(i) - q(i)| \le 2^{-d/4 + 2Q}$ for all $i \in \{0,\dots,n/4\}$.
\end{claim}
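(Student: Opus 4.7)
The plan is to follow Razborov's symmetrization approach for lower bounds on quantum communication protocols with symmetric target functions, combining a low-rank decomposition of the acceptance probability with a Chebyshev-type polynomial approximation.

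First I would apply the standard Kremer--Yao decomposition for $Q$-qubit quantum communication protocols to write
\[ P(x,y) = \sum_{k=1}^K f_k(x) g_k(y), \]
where $K \le 2^{2Q}$ and each $f_k, g_k\colon \{0,1\}^n \to \mathbb{C}$ is bounded in absolute value by $1$. Concretely, if $\ket{\phi_A(x)}$ and $\ket{\phi_B(y)}$ denote the local states of Alice and Bob at the end of the protocol on the parts of the system each ends up holding, then expanding these in fixed local bases of dimension $2^Q$ and pairing the expansion up with the accept projector produces the claimed bilinear form, with the boundedness of $f_k, g_k$ coming from the normalisation of the local states.

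Second I would symmetrize. The distribution over pairs $(x,y)$ with $|x|=|y|=n/4$ and $|x\wedge y|=i$ is invariant under simultaneously permuting the coordinates of both strings, so linearity of expectation yields $P(i) = \sum_k \E[f_k(x) g_k(y)]$. A two-variable analogue of the Minsky--Papert symmetrization lemma then implies that each $\E[f_k(x) g_k(y)]$ is a univariate polynomial in $i$ of degree at most $n$, bounded in absolute value by $1$ on the domain $\{0,\dots,n/4\}$ since $|f_k g_k| \le 1$ pointwise. Thus $P(i)$ itself is a (potentially large-degree) polynomial whose sup norm on $\{0,\dots,n/4\}$ is under control.

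Third I would invoke a quantitative polynomial approximation bound: any polynomial which is bounded in absolute value by $1$ on the discrete interval $\{0,\dots,n/4\}$ admits a degree-$d$ approximation whose error on the same domain is at most $2^{-d/4}$. This is the Chebyshev-truncation estimate used by Razborov; the $1/4$ in the exponent arises from a change of variables mapping $\{0,\dots,n/4\}$ into the Chebyshev interval $[-1,1]$ and using the extremal growth of $T_d$ outside that interval. Applying this approximation to each of the $K \le 2^{2Q}$ symmetrized summands and using the triangle inequality gives the claimed overall error bound of $2^{-d/4+2Q}$.

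The main obstacle will be pinning down the exact constant in the Chebyshev approximation so that the exponent comes out as $-d/4$ rather than some weaker $-\Omega(d)$. This requires working in an orthogonal basis for polynomials on $\{0,\dots,n/4\}$ adapted to this discrete interval, tracking how a sup-norm bound on a polynomial there controls the decay of its Chebyshev coefficients after the appropriate rescaling, and checking that the resulting approximation is uniform in $i$ across the full domain $\{0,\dots,n/4\}$ rather than just in some interior sub-interval.
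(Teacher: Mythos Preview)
The paper does not prove this claim; it cites it from Razborov (in Klauck's formulation) and uses it as a black box. So your proposal should be compared against Razborov's original argument.

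Your Steps 1 and 2 are correct and do match Razborov: the rank-$2^{2Q}$ bilinear decomposition $P(x,y)=\sum_k f_k(x)g_k(y)$ with $\|f_k\|_\infty,\|g_k\|_\infty\le 1$ is standard, and averaging each term over pairs with $|x|=|y|=n/4$, $|x\wedge y|=i$ gives a function $h_k(i)$ bounded by $1$.

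Step 3, however, is false as stated. It is \emph{not} true that an arbitrary polynomial bounded by $1$ on $\{0,\dots,n/4\}$ admits a degree-$d$ approximation on that set with error $2^{-d/4}$: the degree-$n/4$ polynomial interpolating the values $(-1)^i$ is bounded by $1$ there but cannot be approximated to any constant accuracy by a polynomial of degree $o(n)$ (this is the approximate-degree lower bound for parity). Chebyshev truncation does not give exponential decay either: for a polynomial bounded by $1$ on $[-1,1]$ the Chebyshev coefficients are only $O(1)$ each, so truncating at degree $d$ leaves an error that is at best linear in the number of discarded terms. The exponential decay you need is not a generic approximation fact, and your ``main obstacle'' paragraph is chasing a bound that does not exist.

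What Razborov actually uses is the Johnson association scheme. The uniform measure on pairs $(x,y)$ with $|x|=|y|=n/4$ and $|x\wedge y|=i$ is, up to normalisation, the $i$-th adjacency operator of $J(n,n/4)$; these operators are simultaneously diagonalisable with eigenspaces $V_0,\dots,V_{n/4}$, and the eigenvalue on $V_t$ is a degree-$t$ polynomial $p_t(i)$ in $i$ (a dual Hahn polynomial). Hence $h_k(i)=\sum_t p_t(i)\,\langle P_t f_k,g_k\rangle$, and truncating at $t=d$ gives the degree-$d$ approximant. The inner products are controlled by $\|f_k\|_2\|g_k\|_2\le 1$, and the real technical content of Razborov's lemma is a direct estimate showing that, for these particular parameters, the normalised eigenvalues obey $|p_t(i)|\le 2^{-\Omega(t)}$. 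That scheme-specific eigenvalue bound --- not any Chebyshev property --- is what makes the tail sum over $t>d$ decay like $2^{-d/4}$; summing over the $2^{2Q}$ terms then yields the claim. This is the missing ingredient in your outline.
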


\begin{claim}[Nayak and Wu~\cite{nayak99}]
\label{claim:nw}
Let $q:\{0,\dots,n\} \rightarrow [-1/3,4/3]$ be a degree-$d$ polynomial such that $q(a) \le 1/3$, $q(b) \ge 2/3$ for some $a,b \in \{0,\dots,n\}$. Let $c \in \{a,b\}$ be such that $|n/2 - c|$ is maximised, and let $\Delta = |a - b|$. Then $d = \Omega(\sqrt{n / \Delta} + \sqrt{c(n-c)}/\Delta)$.
\end{claim}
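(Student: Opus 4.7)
The plan is to reduce this discrete polynomial bound to two classical extremal inequalities for real polynomials on an interval---Markov's and Bernstein's---applied to $q$ on $[0,n]$. Assume without loss of generality that $a<b$, so $\Delta = b-a$. Since $q(a)\le 1/3$ and $q(b)\ge 2/3$, the mean value theorem applied to $q$ on $[a,b]$ produces some $\xi\in[a,b]$ with $|q'(\xi)| \ge 1/(3\Delta)$. The two lower bounds on $d$ will now come from upper bounding $|q'(\xi)|$ in two different ways using the polynomial inequalities.

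The main technical obstacle is that $q$ is only assumed to be bounded on the integer lattice $\{0,\dots,n\}$, and in principle a degree-$d$ polynomial bounded at integer points can be very large at intermediate real arguments. I would deal with this via a Coppersmith--Rivlin / Ehlich--Zeller-style estimate, which gives a bound of the form $\|q\|_{L^\infty([0,n])} \le C\exp(O(d^2/n))$. In the regime $d = O(\sqrt{n})$ this yields $\|q\|_{L^\infty([0,n])} = O(1)$; in the complementary regime $d = \Omega(\sqrt{n})$ the same estimate still gives a usable polynomial bound on $q$ after partitioning $[0,n]$ into Chebyshev-style subintervals, or one instead applies the sharper Markov-type inequality for polynomials bounded on a lattice directly. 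Either way, one may effectively work as if $\|q\|_{L^\infty([0,n])}$ were an absolute constant; this step is where I expect essentially all the work to go.

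Once this reduction is in place, I would apply Markov's inequality to $q$ rescaled from $[0,n]$ to $[-1,1]$: for a degree-$d$ polynomial $p$ with $\|p\|_{[-1,1]}\le 1$ one has $\|p'\|_{[-1,1]}\le d^2$. Rescaling gives $|q'(\xi)| = O(d^2/n)$, and combining with $|q'(\xi)|\ge 1/(3\Delta)$ yields $d = \Omega(\sqrt{n/\Delta})$. For the second term I would apply Bernstein's pointwise inequality $|p'(x)| \le d/\sqrt{1-x^2}$, which after rescaling becomes $|q'(\xi)| = O(d/\sqrt{\xi(n-\xi)})$. Since $\xi(n-\xi)$ is concave with maximum at $n/2$, its minimum over $[a,b]$ is attained at whichever of $a,b$ is farther from $n/2$---which is precisely $c$ by definition---so $\xi(n-\xi)\ge c(n-c)$ for every $\xi\in[a,b]$. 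Combined with $|q'(\xi)|\ge 1/(3\Delta)$, this yields $d = \Omega(\sqrt{c(n-c)}/\Delta)$. Summing the two lower bounds (equivalently, taking their maximum, since $X+Y = \Theta(\max(X,Y))$ for nonnegative $X,Y$) gives the claim.
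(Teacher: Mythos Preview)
The paper does not prove this claim at all: it is stated as a citation to Nayak and Wu~\cite{nayak99} and used as a black box in the subsequent theorem. So there is no ``paper's own proof'' to compare against.

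That said, your outline is essentially the Nayak--Wu argument (which in turn extends Paturi's method). The reduction via the mean value theorem to a lower bound on $|q'(\xi)|$, followed by Markov's inequality for the $\sqrt{n/\Delta}$ term and Bernstein's inequality for the $\sqrt{c(n-c)}/\Delta$ term, is exactly how the result is proved in the original paper. Your observation that the minimum of $\xi(n-\xi)$ over $[a,b]$ is attained at $c$ is correct and is the right way to pass from $\xi$ to $c$ in the Bernstein step. You have also correctly identified the one genuinely nontrivial ingredient: passing from a bound on integer points to a bound on all of $[0,n]$. Nayak and Wu handle this via the Ehlich--Zeller/Rivlin--Cheney inequality, which gives $\|q\|_{L^\infty[0,n]} = O(1)$ whenever $d \le c\sqrt{n}$; in the complementary regime $d = \Omega(\sqrt{n})$ they argue separately (and indeed the Markov term is then automatic since $\Delta \ge 1$, while the Bernstein term requires a short additional argument). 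Your sketch is accurate but, as you say, this discrete-to-continuous step is where the real work sits; the rest is routine.
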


\begin{thm}
Assume there exists a protocol in the multi-pass quantum streaming model which stores $S$ qubits and uses $T$ passes to compute $F_k$ up to relative error $\epsilon \ge 1/\sqrt{m}$ for a stream of $n$ elements picked from a universe of size $m$, with failure probability $1/4$, for $k \neq 1$. Then $TS = \Omega(1/\epsilon)$.
\end{thm}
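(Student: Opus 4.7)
The strategy is to chain Claims \ref{claim:iy}, \ref{claim:cr}, \ref{claim:razborov}, and \ref{claim:nw} in sequence. First I would have Alice and Bob split the stream between them and simulate the streaming protocol by exchanging the $S$-qubit memory once per pass, obtaining a two-way quantum communication protocol approximating $F_k$ to relative error $\epsilon$ with failure probability $1/4$ and total communication $Q = O(TS)$ qubits. By Claim \ref{claim:iy} this yields a protocol for $\text{GHD}_{\ell, \ell/2}$ with $\ell = \Theta(1/\epsilon^2)$ and $O(TS)$ qubits of communication, and Claim \ref{claim:cr} (taking $\alpha = 1/4$) converts it into a protocol for $\text{GHD}_{\ell, \ell/4}$ with the same asymptotic complexity.

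Next I would restrict attention to balanced inputs $x, y \in \{0,1\}^{\ell}$ with $|x| = |y| = \ell/4$ and $|x \wedge y| = i$, so that $d(x,y) = \ell/2 - 2i$. The specification of $\text{GHD}_{\ell, \ell/4}$ then forces the averaged acceptance probability $P(i)$ to satisfy $P(i) \ge 3/4$ for $i \le \ell/8 - \sqrt{\ell}/2$ and $P(i) \le 1/4$ for $i \ge \ell/8 + \sqrt{\ell}/2$. Applying Claim \ref{claim:razborov} with degree $d = 8Q + C$ for a large enough absolute constant $C$ produces a polynomial $q$ on $\{0, \ldots, \ell/4\}$ with $|q(i) - P(i)| \le 1/12$; consequently $q$ takes values in $[-1/3, 4/3]$, with $q(a) \ge 2/3$ at $a = \ell/8 - \sqrt{\ell}/2$ and $q(b) \le 1/3$ at $b = \ell/8 + \sqrt{\ell}/2$. (If $Q > \ell/32$ then already $TS = \Omega(\ell) = \Omega(1/\epsilon^2)$ and the conclusion is even stronger; otherwise the hypothesis $d \le \ell/4$ required by Razborov's bound is satisfied.)

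Finally I would apply Claim \ref{claim:nw} to $q$ with $n_{NW} = \ell/4$ and $\Delta = |a - b| = \sqrt{\ell}$. The crucial observation is that both $a$ and $b$ lie within $\sqrt{\ell}/2$ of the midpoint $n_{NW}/2 = \ell/8$, so for either choice of $c \in \{a, b\}$ we have $c(n_{NW} - c) = \Theta(\ell^2)$. The second term of the Nayak--Wu bound is therefore dominant:
\[ d \;=\; \Omega\!\left(\frac{\sqrt{c(n_{NW} - c)}}{\Delta}\right) \;=\; \Omega(\sqrt{\ell}) \;=\; \Omega(1/\epsilon). \]
Combined with $d = O(Q) = O(TS)$, this yields $TS = \Omega(1/\epsilon)$ as claimed.

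The main subtlety I anticipate is arranging the parameters so that the stronger term of Claim \ref{claim:nw} is actually the binding one: the first term $\sqrt{n_{NW}/\Delta}$ alone only gives $\Omega(1/\sqrt{\epsilon})$, which would be too weak. This is precisely why the passage through $\text{GHD}_{\ell, \alpha\ell}$ for $\alpha < 1/2$ via Claim \ref{claim:cr} is essential --- with the original threshold $\ell/2$ and balanced inputs, $d(x,y) = \ell/2 - 2i$ never reaches $\ell/2 + \sqrt{\ell}$, so the reject region of $i$ is empty and Razborov's symmetrisation yields no useful constraint on $q$ to feed into Nayak--Wu.
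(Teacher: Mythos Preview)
Your proof is correct and follows essentially the same chain of reductions as the paper: Claim~\ref{claim:iy} $\rightarrow$ Claim~\ref{claim:cr} $\rightarrow$ restrict to balanced inputs $\rightarrow$ Claim~\ref{claim:razborov} $\rightarrow$ Claim~\ref{claim:nw}. The only difference is that you take $\alpha = 1/4$ whereas the paper takes $\alpha = 1/8$; with your choice the jump in $P(i)$ lands exactly at $\ell/8 \pm \sqrt{\ell}/2$, matching the Nayak--Wu application cleanly, and your explicit treatment of the side condition $d \le \ell/4$ in Razborov's lemma and your explanation of why the shift via Claim~\ref{claim:cr} is essential are details the paper leaves implicit.
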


\begin{proof}
The theorem follows from Claims \ref{claim:iy} to \ref{claim:nw}. Assume there exists a protocol which stores $S$ qubits and uses $T$ passes to approximate $F_k$ up to relative error $\epsilon$, with success probability $3/4$. Then by Claim \ref{claim:iy} there is a protocol for $\text{GHD}_{\ell,\ell/2}$, where $\ell = \Theta(1/\epsilon^2)$, with the same success probability, using $O(TS)$ qubits of communication. By Claim \ref{claim:cr}, there is similarly a protocol for $\text{GHD}_{\ell,\ell/8}$ using $Q = O(TS)$ qubits of communication. Now consider the special case of this problem where the inputs $x$, $y$ are restricted to Hamming weight $\ell/4$. Then $d(x,y) = \ell/2 - 2|x \wedge y|$, so in the notation of Claim \ref{claim:razborov}, $P(i) \le 1/4$ for $i \ge \ell/8 + \sqrt{\ell}/2$, and $P(i) \ge 3/4$ for $i \le \ell/8 - \sqrt{\ell}/2$. Taking $d = O(Q)$, there is a degree-$d$ polynomial $q$ such that $q(i) \le 1/3$ for $i \ge \ell/8 + \sqrt{\ell}/2$, $q(i) \ge 2/3$ for $i \le \ell/8 - \sqrt{\ell}/2$. By Claim \ref{claim:nw}, we must have $d = \Omega(\sqrt{\ell})$. Thus $TS = \Omega(\sqrt{\ell}) = \Omega(1/\epsilon)$, completing the proof.
\end{proof}



\section{Outlook}

We have initiated the study of the quantum complexity of approximately computing the frequency moments. However, there are still many open questions to be resolved. In particular:
\begin{itemize}
\item What is the quantum query complexity of approximately computing $F_\infty$? As discussed in Section~\ref{sec:finfty}, this seems to be closely connected to a ``gapped'' version of the well-studied $k$-distinctness problem, whose precise complexity is still unknown. For each $k>2$, improved bounds on $k$-distinctness would also improve our algorithm for computing $F_k$.

\item What is the quantum streaming complexity of approximating $F_k$ for $k > 2$? Just as the efficient quantum algorithm for $F_2$ is based on the streaming algorithm of Alon, Matias and Szegedy~\cite{alon96}, it would be interesting to determine whether more recent streaming algorithms for $F_k$~\cite{indyk03,bhuvanagiri06,braverman14} could be used to obtain efficient quantum algorithms.
\end{itemize}
It would also be interesting to find a practically relevant problem demonstrating a separation between quantum and classical streaming complexity in the one-way model. This could involve proving a separation between one-way quantum and classical communication complexity for a total function, which is a major open problem.


\subsection*{Acknowledgements}

I would like to thank Rapha\"el Clifford for suggesting the question which inspired this paper, Alexander Belovs for sending me a copy of~\cite{ambainis16}, and Jayadev Acharya for pointing out~\cite{acharya15}. I would also like to thank two anonymous referees for their careful and helpful comments. This work was supported by EPSRC Early Career Fellowship EP/L021005/1.


\appendix

\section{Proofs of combinatorial bounds}
\label{app:proofs}

\begin{replem}{lem:collmeanvar}
Fix $\ell$ such that $1 \le \ell \le n$. Let $s_1,\dots,s_\ell \in [n]$ be picked uniformly at random and define
\[ C_k(s_1,\dots,s_\ell) \coloneqq |\{ T \in \binom{[\ell]}{k}: a_{s_i} = a_{s_j} \text{ for all } i,j \in T \}|. \]
Then
\[ \E_{s_1,\dots,s_\ell}[C_k(s_1,\dots,s_\ell)] = \frac{\binom{\ell}{k}F_k}{n^k} \]
and
\[ \Var(C_k) \le \sum_{q=k}^{2k-1} \left(\frac{\ell F_k^{1/k}}{n}\right)^q. \]
\end{replem}

\begin{proof}
First observe that, for any set $T \in \binom{[\ell]}{k}$,
\be \label{eq:fk} \Pr_{s_1,\dots,s_\ell} \left[ a_{s_i} = a_{s_j} \text{ for all } i,j \in T \right] = \frac{1}{n^k} \sum_{p_1,\dots,p_k=1}^n [a_{p_1} = \dots = a_{p_k}]  = \frac{F_k}{n^k}. \ee
For the expectation, we compute
\beas
\E_{s_1,\dots,s_\ell}[C_k(s_1,\dots,s_\ell)] &=& \E_{s_1,\dots,s_\ell} [ |\{ T \in \binom{[\ell]}{k}: a_{s_i} = a_{s_j} \text{ for all } i,j \in T \}|]\\
&=& \sum_{T \in \binom{[\ell]}{k}} \Pr_{s_1,\dots,s_\ell} \left[ a_{s_i} = a_{s_j} \text{ for all } i,j \in T \right]\\
&=& \frac{\binom{\ell}{k}F_k}{n^k}.
\eeas
We now bound the variance. We have
\beas
&& \E_{s_1,\dots,s_\ell}[C_k(s_1,\dots,s_\ell)^2]\\
&=& \E_{s_1,\dots,s_\ell}\left[\left(\sum_{T \in \binom{[\ell]}{k}} [a_{s_i} = a_{s_j} \text{ for all } i,j \in T] \right)^2 \right]\\
&=& \sum_{T,U \in \binom{[\ell]}{k}} \Pr_{s_1,\dots,s_\ell}[a_{s_i} = a_{s_j} \text{ for all } i,j \in T, \text{ and } a_{s_p} = a_{s_q} \text{ for all } p,q \in U]\\
&=& \sum_{T,U \in \binom{[\ell]}{k}, T \cap U \neq \emptyset} \Pr_{s_1,\dots,s_\ell}[a_{s_i} = a_{s_j} \text{ for all } i,j \in T, \text{ and } a_{s_p} = a_{s_q} \text{ for all } p,q \in U]\\
&& + \sum_{T,U \in \binom{[\ell]}{k}, T \cap U = \emptyset} \Pr_{s_1,\dots,s_\ell}[a_{s_i} = a_{s_j} \text{ for all } i,j \in T, \text{ and } a_{s_p} = a_{s_q} \text{ for all } p,q \in U]\\
&=& \sum_{T,U \in \binom{[\ell]}{k}, T \cap U \neq \emptyset} \Pr_{s_1,\dots,s_\ell}[a_{s_i} = a_{s_j} \text{ for all } i,j \in T \cup U]\\
&& + \sum_{T,U \in \binom{[\ell]}{k}, T \cap U = \emptyset}  \Pr_{s_1,\dots,s_\ell}[a_{s_i} = a_{s_j} \text{ for all } i,j \in T] \Pr_{s_1,\dots,s_\ell}[a_{s_i} = a_{s_j} \text{ for all } i,j \in U]\\
\eeas
\beas
&=& \sum_{q=k}^{2k-1} |\{T,U \in \binom{[\ell]}{k}:|T \cup U| = q\}| \frac{F_q}{n^q} + |\{T,U \in \binom{[\ell]}{k}:|T \cup U| = 2k\}| \left(\frac{F_k}{n^k} \right)^2,
\eeas
where the final equality is (\ref{eq:fk}). We have the rough bound that
\[ |\{T,U \in \binom{[\ell]}{k}:|T \cup U| = q\}| \le \ell^{2k-q} \ell^{2(q-k)} = \ell^q, \]
because we can specify $T$ and $U$ by picking the $|T \cap U| = 2k-q$ elements in their intersection, then the $q-k$ elements in each set $(T \cup U) \backslash T$, $(T \cup U) \backslash U$. We also have
\[ F_q^{1/q} = \left(\sum_j n_j^q\right)^{1/q} \le \left(\sum_j n_j^k \right)^{1/k} = F_k^{1/k} \]
for any $q \ge k$, by monotonicity of $\ell_p$ norms. Combining these two bounds,
\[  \E_{s_1,\dots,s_\ell}[C_k(s_1,\dots,s_\ell)^2] \le \sum_{q=k}^{2k-1} \left(\frac{\ell F_k^{1/k}}{n}\right)^q + \binom{\ell}{k}\binom{\ell-k}{k} \left(\frac{F_k}{n^k} \right)^2. \]
Therefore
\beas
\Var(C_k) &=&  \E_{s_1,\dots,s_\ell}[C_k(s_1,\dots,s_\ell)^2] - (\E_{s_1,\dots,s_\ell}[C_k(s_1,\dots,s_\ell)])^2\\
&\le& \sum_{q=k}^{2k-1} \left(\frac{\ell F_k^{1/k}}{n}\right)^q - \binom{\ell}{k}\left(\frac{F_k}{n^k} \right)^2 \left(\binom{\ell}{k} - \binom{\ell-k}{k} \right) \\
&\le& \sum_{q=k}^{2k-1} \left(\frac{\ell F_k^{1/k}}{n}\right)^q
\eeas
as claimed.
\end{proof}


\bibliographystyle{plain}
\bibliography{../../thesis}

\end{document}